\title{Deterministic Theories}
\author{Hans P. Halvorson, JB Manchak, James Owen Weatherall}
\date{\today}
\newtheorem{prop}{Proposition}
\theoremstyle{definition}
\newtheorem{ps}{Proposal}
\newtheorem*{fd}{Full Determinism}
\newtheorem*{example}{Example}
\newcommand{\de}{\sim _{\mathrm{r}}}
\newcommand{\qe}{\sim _{\mathrm{q}}}
\newcommand{\qs}{\Sigma _{\mathrm{q}}} \newcommand{\ds}{\Sigma _{\mathrm{r}}}
\begin{document}


\maketitle




\begin{abstract}
  Determinism is (roughly) the thesis that the past determines the
  future. But efforts to define it precisely have exposed deep
  methodological disagreements. Standard possible-worlds formulations
  of determinism presuppose an ``agreement'' relation between worlds,
  but this relation can be understood in multiple ways, none of which
  is particularly clear.  We critically examine the proliferation of
  definitions of determinism in the recent literature, arguing that
  these definitions fail to deliver clear verdicts about actual
  scientific theories. We advocate a return to a formal approach, in
  the logical tradition of Carnap, that treats determinism as a
  property of scientific theories, rather than an elusive metaphysical
  doctrine.

  We highlight two key distinctions: (1) the difference between
  qualitative and ``full'' determinism, as emphasized in recent
  discussions of physics and metaphysics, and (2) the distinction
  between weak and strong formal conditions on the uniqueness of world
  extensions. We argue that defining determinism in terms of
  metaphysical notions such as haecceities is unhelpful, whereas
  rigorous formal criteria such as Belot's D1 and D3 offer a tractable
  and scientifically relevant account. By clarifying what it means for a
  theory to be deterministic, we set the stage for a fruitful
  interaction between physics and metaphysics. \end{abstract}

\section{Determinism: Capturing the intuition}

The thesis of determinism seems easy enough to state:
\begin{quote}
  \emph{Determinism:} The past determines the future.
\end{quote}
Unfortunately, stated this way, the thesis is uninformative, because the term ``determines'' on the right is just a variant of the
term we are trying to define. Can we do better than this?

For many philosophers over the past fifty years, that the solutions has been to invoke possible worlds to sharpen the thesis:\footnote{Of course, possible worlds talk is most commonly associated with \citet{kripke} and, in a different way, \citet{lewis}.  See \citet{loux} for a helpful early anthology showing how these ideas spread through metaphysics and epistemology in the early 1970s.  It enters the philosophy of physics, and especially the analysis of determinism, most prominently with \citet{primer}.  Much more recently, there has been a movement away from ``modality-first'' approaches to metaphysics \citep{Dorr+etal,Fritz,sider}, including among philosophers of physics \citep{cudek+arana}.  Nonetheless, the argot persists.}
\begin{quote}
  \emph{Determinism:} For any two possible worlds $W,W'$, if $W$ and
  $W'$ agree on the past, then $W$ and $W'$ agree on the
  future. \end{quote} This possible-worlds definition gives a
quasi-mathematical gloss to the word ``determines'': there is a
well-defined function from past segments to possible worlds. It seems
like we have made good progress already in cashing out the notion of
determinism. We have replaced the opaque word ``determines'' with what
appears to be a statement about existing things and a clear relation
(``agree'') between them.

But there is a problem. The relation ``agree'' has turned out to be a
point of great contention among philosophers. To illustrate with a
simple example:
\begin{quote} (Q) Let $W$ be a world with one particle on the left and
  an identical particle on the right. Let $W'$ be the world in which
  the two particles are swapped. Do $W$ and $W'$ agree? \end{quote}
Indecision about Q can infect our judgment about whether
determinism holds in concrete examples. For example, we can imagine
that $W$ and $W'$ are exactly identical at all times leading up to
now. If $W$ and $W'$ are judged to agree at the present time, then
determinism holds; if $W$ and $W'$ are judged not to agree at the
present time, then determinism fails. It seems that the fairly clear
notion of determinism has become clouded by esoteric questions
about what it means for possible worlds to agree.

One's initial reaction might be that these kinds of examples only show
that the notion of determinism was less clear than we imagined it
was. But what are we to say about the fact that scientists seem
confident in their judgments about whether physical theories are
deterministic? For example, there is a genuine difference between
deterministic and stochastic equations of motion. Or, perhaps more
controversially, but still to the point: it seems quite clear that
there is a difference vis-a-vis determinism between quantum mechanics
and the theories that went before it.  Scientists appear to be able
make these distinctions reliably without having a theory of when
possible worlds ``agree''.  What is behind those judgments?

Our basic claim in this paper is that whether or not a physical theory
is deterministic can be seen by analyzing the mathematical structure
of the theory.  Talk of possible worlds is helpful only insofar as it
sometimes functions as a substitute for talk of models of a theory;
for most purposes, it introduces more problems than it solves.  And we
do not need a different metaphysical framework to replace Lewisian
possible worlds, either.  Mathematics is perfectly adequate for this
task.

Our arguments run against the grain of the past three decades of work
on determinism in physics (and metaphysics), which has neared
consensus that there can be no adequate ``formal'' definition of
determinism.  A theory, it is said, can only be deterministic or
indeterministic under an ``interpretation''.  This is mistaken, at
least as usually intended, both because it relies on confused ideas
about ``interpretation'' and because a perfectly adequate formal
criterion of determinism is already available --- and has been for
thirty years.  We will argue that challenges for this criterion that
have previously been discussed do not show that ``interpretation'' is
needed; rather, they highlight the importance of precisely formulating
a theory, as small differences in what we take a theory to be can lead
to different judgments about whether that theory is deterministic.

\section{The fate of formal definitions}

The young Bertrand Russell frequently contrasted the clarity of
nineteenth century mathematics with the opacity of Hegelian
metaphysics. He pointed out that mathematicians had provided sharp
definitions of the traditional number systems as well as general
concepts such as infinite sets and continuous functions.
\begin{quote}
  What is infinity? If any philosopher had been asked for a definition
  of infinity, he might have produced some unintelligible rigmarole,
  but he would certainly not have been able to give a definition that
  had any meaning at all.  Twenty years ago, roughly speaking,
  Dedekind and Cantor asked this question, and, what is more
  remarkable, they answered it. They found, that is to say, a
  perfectly precise definition of an infinite number or an infinite
  collection of things. \citep[p 92]{russell}
\end{quote} Russell then proposed that philosophers should model their
method on that of the nineteenth century mathematicians. Obviously,
Russell's vision was a shaping force in analytic philosophy, beginning
with Carnap's attempts to ``explicate'' the concepts of the natural
sciences \citep[see][]{leitgeb}.

In \emph{Logical Syntax of Language}, Carnap suggests that some deep
and murky metaphysical theses correspond to precise statements about
the structure of scientific theories. In particular, with regard
to determinism, he says:
\begin{quote}
  The opposition between the determinism of classical physics and the
  probability determination of quantum physics concerns a syntactical
  difference in the system of natural laws, that is, of the P-rules of
  the physical language. \citep[p 307]{lsl}
\end{quote}
More concretely, Carnap suggests that the metaphysical doctrine:
\begin{quote}
  Every process is univocally determined by its causes \end{quote}
corresponds to a formal property of a scientific theory:
\begin{quote}
  For every particular physical sentence $\varphi$, there is for any
  time coordinate $t$, which has a smaller value than the time
  coordinate which occurs in $\varphi$, a class $\Gamma$ of particular
  sentences with $t$ as time coordinate, such that $\varphi$ is a
  P-consequence of $\Gamma$. \end{quote} The formal property described
here may seem more opaque than the original metaphysical
doctrine. Nonetheless, unlike the original metaphysical doctrine,
whether a theory is deterministic in the latter sense is something
that could be checked by a mathematician, so long as the relevant
theory has been specified in a mathematically clear fashion.

Even after Carnap's methods fell under attack by Quine, some notable
philosophers continued to think of determinism as a formal property of
scientific theories. J.J.C. Smart claims that:
\begin{quote} A perfectly precise meaning can be given to saying that
  certain theories are deterministic or indeterministic (for example
  that Newtonian mechanics is deterministic, quantum mechanics
  indeterministic), but our talk about actual events in the world as
  being determined or otherwise may be little more than a reflection
  of our faith in prevailing types of physical theory. \citep[p
  294]{smart-free} \end{quote} But change was in the wind, and by the
1970s, one begins to find a different kind of definition of
determinism. The most influential of these ``metaphysical''
definitions of determinism is from David Lewis:\footnote{Lewis credits
  \citet{montague} with a similar definition, but notes that Montague
  does not focus on metaphysical issues. We won't discuss Montague's
  paper in detail, but we suggest that his argument against a
  ``syntactic'' definition of determinism is unconvincing, and is
  based on a false dilemma between syntactic and semantic
  definitions.}
\begin{quote} A system of laws of nature is Deterministic iff no two
  divergent worlds both conform perfectly to the laws of that
  system. Second, a world is Deterministic iff its laws comprise a
  Deterministic system. Third, Determinism is the thesis that our
  world is Deterministic. \citep[p 360]{lewis-work} \end{quote} Lewis'
definition was put front and center in philosophy of physics by
\citet{primer}, and it has ever-since served as the backdrop for the
``hole argument'' in General Relativity
\citep[see][]{butterfield,pooley}. More generally, Lewis'
diverging-worlds conception of determinism informs a wide range of
discussions in analytic philosophy.

What was originally an inclination to do philosophy in a different way
(than the logical positivists) soon developed into theoretical
arguments against formal approaches. For example, \citet{belot-do}
argues that determinism cannot possibly be defined in anything like the way that
Carnap and Smart proposed, because it is not a \emph{formal} property of (mathematical) theories:
\begin{quote} The first point that I would like to make is that
  determinism cannot be a \emph{formal} property of physical
  theories. \citep[p 88, emphasis in original]{belot-do} \end{quote}

Belot's position here is part of a trend among analytic metaphysicians and metaphysically-oriented philosophers of science away from formal definitions. For example, regarding formal definitions of equivalence
of theories, Sider claims that, ``the purely formal approach is a
nonstarter'' \citeyearpar[p 180]{sider}, and, ``purely formal accounts
fail because they entirely neglect meaning'' \citeyearpar[p
181]{sider}. A similar complaint against formal definitions of
theoretical equivalence is voiced by Kevin Coffey:
\begin{quote} The challenges posed by the two puzzles are not unique
  to formal approaches, but I think we should be particularly
  pessimistic about the prospects of formal approaches meeting those
  challenges.  \citep[p 834]{coffey} \end{quote} And Trevor Teitel
argues that formal definitions are of little interest for
philosophical investigations:
\begin{quote} I will investigate various views one might hold about
  the non-mathematical significance of these formal criteria, and
  argue that none is tenable. My tentative conclusion is that formal
  criteria are of limited non-mathematical interest. \citep[p
  4120]{teitel-eq} \end{quote}

What justifies this widespread rejection of formal methods? Sider, Coffey, and Teitel make different arguments from Belot, and from one another.  But they also put great weight on the importance of \emph{interpretation} or, as Sider says, \emph{meaning}. We will focus on Belot's argument against formal definitions of determinism.  When we see why that argument fails, it will be clear why we would reject other arguments for the same kind of view.  And then, with formal methods rehabilitated, we will turn to a formal analysis of other arguments in the literature on determinism.

\section{Interpretation is a formal matter}\label{sec:interpretation}

Belot's argument for the claim that determinism cannot be a formal
property of theories involves an example.  He presents a set of
equations --- Maxwell's equations, describing classical
electromagnetism, written in a particular way --- that he claims are
standardly understood to be part of a deterministic theory, but which
may yet also be part of an indeterministic one.  The difference, he
says, comes down to interpretation.
 \begin{quote}
   This completes the argument: determinism cannot be a formal
   property of theories, because the same theory may be deterministic
   or indeterministic, depending on how it is
   interpreted.\citep[p. 88]{belot-do} \end{quote} The argument as
 stated has a suppressed premise: that interpretation is not a formal
 matter. But we deny this premise, twice over. First, the claim that
 interpretation is not a formal matter takes interpretation to involve
 a re-negotiation of the relationship between language and
 reality. Second, and relatedly, the claim that interpretation is not
 a formal matter fails to see that interpretation of a theory very
 often calls for a formal precisification of that theory.

We can make both points by describing Belot's example in more detail. As he set it up, it concerns two theories.  One of these theories is classical electromagnetism, written in terms of a scalar (electric) potential and a vector potential, and satisfying Maxwell's equations.  The second theory concerns the properties of a ``plenum of point particles'' called ``blips'' (88).  Blips are represented by two fields on (Newtonian) spacetime: a scalar field $\varphi$ representing the charge of the blip(s) at each point of spacetime; and a vector field $A$ representing their velocity at each point.  The blip theory is remarkable because it describes the motion of blips using just two equations---equations that happen to be syntactically identical to Maxwell's equations written for a scalar potential $\varphi$ and vector potential $A$.\footnote{See \citet[p. 88]{belot-do} for details, and a fine example of his inimitable style.} Belot argues the first of these theories is deterministic, standardly (and properly) construed.  But the second is indeterministic.  The reason is that a specification of the charges and velocities of blips throughout space at a time cannot uniquely determine the charges and velocities of blips at other times.  Given any  solutions to the blip equations, one can always apply any of a family of transformations to produce another solution that agrees on the initial specification.  This is so even though they involve the same equations as Maxwell's theory.

How could it be that one of these theories is deterministic, but the other is not?  Cognoscenti may guess the answer, but we will postpone giving it.  Instead, consider what Belot takes to be at issue. For him, the difference between the theory is a matter of how they are interpreted.  What does he mean by
``interpretation''? He poses and answers the question directly.
\begin{quote} What \emph{is} an interpretation of a theory? An
  interpretation of a theory is a story that you tell about the
  theory.  \dots An interpretation is a correspondence between bits of
  models of the theory and bits of physical situations: between
  initial value constraints, variables and differential equations on
  the one hand, and instantaneous states, physical entities,
  properties, relations, etc., and laws of nature on the
  other. \citep[p 92]{belot-do} \end{quote} This passage suggests a widespread view
in late twentieth century analytic philosophy, which is that to interpret something like a physical theory is to apply some ``semantic glue'', thereby attaching symbols to their worldly referents. But how does this metaphor actually work?  Belot's articulation should give us
pause, because it involves an odd mix of two very different things.
The first is the mathematical theory of interpretation, developed by
Tarski et al. six decades before, which explicates an interpretation
of a formal language as an assignment of set-theoretic extensions to
meaningless symbols.  Belot's talk of the ``models'' of a theory
invokes this tradition.

But Belot also motions towards a much woolier, and apparently older, sense of ``interpretation'' when he speaks of a correspondence between models and the world.  In doing so, Belot has re-doubled the notion of interpretation.  Recall that in its
Tarskian sense, an interpretation of a theory \emph{is} a model of
that theory. But now Belot speaks of interpretation as, ``a
correspondence between bits of models of a theory and bits of physical
situations''. In this case, the interpretations (models) of a theory
are being interpreted. But what does that mean?  How would we produce or exhibit a ``correspondence between bits of models of the theory and bits of physical situations''?

Belot does not say how this is supposed to go in general.  And when it comes time for him to describe two theories that differ (only?) in interpretation, what we find is not a ``correspondence between bits of models of the theory and bits of physical situations'', but rather some \emph{words}, further elaborating some formal aspects of two theories he wishes us to understand.  The exercise is all carried out on paper, by sketching a correspondence between bits of models of the theory and yet another kind of symbol: words in natural language.  But this is just to layer (formal, symbolic) ``interpretation'' on ``interpretation''---that is, to do the ``redoubling'' we complained about above.  And on reflection, given that he has written a paper, it is hard to see what his sense of interpretation could amount to aside from creating yet
other models, and associating bits of the old models to bits of the new
models.  So in fact, though he motions at something different, what he exhibits as interpretation does not give us some new way of crossing the
word-world barrier; it merely connects things that lie on the
word side, the formal side, to other things on that side.

One might well object here: how does \emph{any} of our linguistic or
symbolic practice have meaning, if all of our ``interpretation''
consists of layering models on models and words upon words?  Surely,
one might say, we do sometimes interpret in the woolier, and perhaps
deeper, sense --- and at least some of our words have meaning.  Of
course we accept this.  But we suggest this line of thought just leads
to a dual perspective on our position, entirely consistent with what
we have said thus far.  From this perspective, \emph{all} of the
formal, mathematical, linguistic, and symbolic structures that we
employ, in ordinary life, in philosophy, and in science, have meaning,
at least to some extent, because of the way they are embedded in our
broader cultural and cognitive processes.  To put the point in terms of the metaphor noted above: there is no point in the paper in which Belot \emph{applies} semantic glue.  The glue -- or rather, whatever processes allow any of our symbolic activity to play the role it does in reasoning, action, and communication -- is already there.

\citet{interpretation} puts the point nicely.
\begin{quote}We don't begin our analysis of scientific theories by
  taking some mysterious equations carved on stone tablets and
  puzzling out what they might mean: theories are born as bearing all
  kinds of semantic or interpretational relationships to our broader
  representational practice. ... [T]he problem is not how to
  comprehend an alien practice, but how to fully understand a practice
  which we already --- at least to some extent --- inhabit. (18)
\end{quote}
Again, we will not attempt to give a story of how meaning works.  Our
point is that there is no line to draw between ``merely'' or
``purely'' formal methods or structures, between truly ``meaningless
symbols'' or marks on a page, and any other symbols we might try to
employ to make our ideas clear or communicate them to others.  The
idea from the formal theory of interpretation that we are giving
meaningless symbols ``meaning'' by mapping them to set-theoretic
structures is an idealization --- as should be obvious, since
\emph{both} sides of the interpretation map consist of rich but
thoroughly formal theories.\footnote{\label{urelement} Some readers will raise an objection at this point about the role of (literal, physical) ``ur-elements'' in formal semantics.  Addressing this thought would take us too far afield and we will leave it to future work.  For now, suffice it to say that we reject the idea that physical things can be ``in'' abstract ones; set theory with ur-elements is best viewed as a formal theory living within ordinary set theory, where some abstract objects have highly suggestive names.}   Likewise, interpretation in the wild
consists of using meaningful symbols to guide reasoning about other
meaningful symbols.  It is all just more of the same.

To illustrate the point, we now return to how Belot can maintain that
electromagnetism is deterministic, while blip theory is not.  Belot
does not fully articulate these theories in a formal way.  But he
describes them with sufficient clarity that one can see why they are
in fact mathematically distinct --- and it is precisely those
mathematical differences that support his judgments about whether they
are deterministic.  The crucial step occurs when, in describing the
theories, he explains when putatively distinct models of
electromagnetism actually correspond to the same physical situations
(described in different terms)---whereas for the blip theory, those
same models correspond to distinct physical situations.  This amounts
to a disambiguation between two possible theories, given by specifying
when models are \emph{isomorphic}, i.e., when they should be seen as
mathematically (and, for the purposes of analyzing the theory,
semantically) equivalent.  According to one disambiguation of these
theories (namely, electromagnetism), there is an isomorphism between
models if they are related by a certain class of transformations known
as \emph{gauge transformation}.\footnote{See \citet{gauge} for a
  discussion of this way of understanding gauge transformations in
  electromagnetism, along with a different way of distinguishing
  standard electromagnetism from its close cousins with different
  isomorphisms.}  Among the gauge transformations are all of the
transformations that related ``distinct'' evolutions of a given
initial configuration of the blip theory.  Meanwhile, according to the
disambiguation of electromagnetism --- or rather, the blip theory ---
gauge transformations are not isomorphisms, and thus count as
(formally) distinct by the lights of that theory. This is what
supports the conclusion that electromagnetism is deterministic but the
blip theory is not.

The important point for our purposes is that everything here is formal:
the description of the theory --- both models and the relationship of
isomorphism between models --- as well as the definition of
determinism in the background. Belot has not escaped from the circle of formalism and
into a realm of inarticulable concreta; he has merely motioned toward
the power of formal tools to capture conceptual distinctions. And we
thank him for this: he shows clearly that what appears to be a single
theory is in fact two, depending on how one spells out the details. We can call
this explication of formal structure ``interpretation'' if we want ---
as Tarski did --- but in that case, interpretation is a formal
matter.

Of course, we have not given a first-order theory then specified a
Tarskian interpretation function.  We operate instead at a level
closer to ordinary mathematics.  But even so, it shows how the salient
activity for ``interpreting'' our theories involves \emph{formal}
work--- in this case, the formal (or mathematical) work of specifying
which models are isomorphic --- because a different specification of
the relationship of isomorphisms gives a different theory, and it
might transform a deterministic theory into an indeterministic one, or
vice versa. We will presently see a more subtle example of where
careful interpretation of a different kind, though still formal, can
disambiguate between theories whose differences are difficult to
detect through more conversational descriptions.

\section{From explication to metaphysical speculation}

With that defense --- and explanation --- of formal methods in mind,
we now return to analyses of determinism downstream of Lewis. There
are two directions in which Lewis' \citeyearpar{lewis-work} definition
of determinism has been developed.  One is the quasi-mathematical
approach introduced by Earman and Butterfield, but then attacked by
Belot.  We return to that below.  First, we will consider the other
direction of development.  On this branch, Lewis' definition has been
suited out with various metaphysical distinctions --- and especially
an elusive distinction between ``qualitative'' and ``full'' agreement
of possible worlds. These metaphysical adumbrations on Lewis'
definition have given rise to a significant literature which might
seem, at first, to evince a rich interaction between physics and
metaphysics.  But as we will presently argue --- bringing formal
methods to bear --- these appearances are misleading.

Consider Hawthorne's \citeyearpar{hawthorne} proposed distinction
between two senses in which the world might be
deterministic.\footnote{A similar distinction can be found in
  \citep{melia}, though not quite as sharply put.}

\begin{quote} \emph{Qualitative Determinism:} For all times $t$, there
  is no possible world which matches this world in its qualitative
  description up to $t$, and which has the same laws of nature as this
  world, but which doesn't match this world in its total qualitative
  description. (239)\end{quote}

\begin{quote} \emph{De Re Determinism:} For all times $t$, there is no
  possible world which matches this world in its de re description up
  to $t$, and which has the same laws of nature as this world, but
  which doesn't match this world in its total de re
  description. (239) \end{quote}

Hawthorne's distinction seems to be nearly ubiquitous in the
literature. For example, \citet{teitel} distinguishes laws that are
qualitatively deterministic from laws that are ``fully''
deterministic. Similarly, \citet{pooley} claims that what he calls
``substantivalist'' General Relativity is deterministic to a lower
degree ``Det2'',\footnote{To be sure, \citet[\S5]{pooley} also argues
  that an anti-haecceitist -- such as himself
  \citep[c.f.][pp.99-103]{pooleyOld} -- would maintain that Det1 and
  Det2 are strictly equivalent, though he also feels the draw of
  examples, such as those discussed in the main text, intended to show
  Det2 to be defective.  We see no value to the qualifier
  ``substantivalist'' here, since what Pooley has in mind is
  apparently just textbook General Relativity.  We will return to this
  point below; for now, we will drop the qualifier.} but not in the
most eminent sense ``Det1''.
\begin{quote}
  \begin{tabular}{l|l|l|l}
    & grade 1 & grade 2 \\ \hline \hline
    Hawthorne & qualitative & de re \\ \hline
    Teitel    & qualitative & full \\ \hline
    Pooley    & Det2        & Det1 \\ \hline
    \end{tabular}
\end{quote}
Each of these authors
indicates that there is some thicker kind of determinism that is not
established by standard proofs, e.g.\, of the initial value problem
for partial differential equations.

Clearly, the terminology here has become variegated.\footnote{Or worse, because the terminology is also not always consistently applied. For instance, Dewar
\citeyearpar{dewar2016,dewar2024} proposes a more subtle distinction
between ``de dicto'' and ``de re'' determinism, where his ``de dicto'' is clearly grade 1, but his ``de re'' determinism is weaker than Hawthorne's. (Actually, it is similar to Belot's D2 or Melia's ``Second Resolution'' \citeyear{melia}.)  \citet{four-man} use the terminology much like Dewar. \citet{cudek} tracks the same distinctions as \citet{four-man}, but using terminology that follows the model of \citet{butt-sub}.  And so on.}  For simplicity,
we will use Teitel's ``full determinism'' to gather together the
different senses of the highest grade of determinism.  We will argue
that previous attempts to define full determinism fail: it cannot be
cashed out either in terms of types of descriptions, nor in terms of
some equivalence relation on possible worlds. We do not conclude,
however, that qualitative determinism is the only kind of determinism
worth worrying about. In fact, it is possible to capture the
distinction between qualitative and full determinism, properly construed, in ``purely
formal'' terms, viz.\ in terms of uniqueness of extensions of
isomorphisms. Appearances to the contrary arise due to confusion about
(formal) interpretation and imprecision in the individuation of
theories.

For possible worlds $W,W'$, let us write $W\qe W'$ if $W$ and $W'$
match in qualitative description, in Hawthorne's sense (i.e.\ are
``qualitatively identical'' in Pooley's sense). Let's write $W\de W'$
if $W$ and $W'$ match in de re description, in Hawthorne's sense
(i.e.\ are ``intrinsically identical'' in Pooley's sense). While we
agree that it is possible to distinguish between grades of determinism,
we will show that such distinctions cannot be based on some
distinction between qualitative and de re equivalence of worlds. In
the remainder of this section, we will try to give the distinction
between $\qe$ and $\de$ a run for its money --- concluding that it
only gives intelligible answers for theories with names for all
objects.\footnote{The term ``name'' is loaded in the philosophy of language.  To forestall misunderstanding: when we write ``name'' we mean ``constant symbol in a formal language'', except where it is clear from context that we are entertaining other views.} In Section \ref{d3}, we propose a distinction between grades
of determinism that does not rely on a distinction between qualitative
and de re equivalence of worlds.

\subsection{Descriptions: Qualitative versus De Re}

Hawthorne and Teitel explain the distinction between $\qe$ and $\de$
in terms of a distinction between types of propositions.  Roughly
speaking, $M\qe M'$ means that $M\vDash \varphi$ iff
$M'\vDash\varphi$, for all qualitative propositions $\varphi$. And
$M\de M'$ means that $M\vDash\varphi$ iff $M'\vDash\varphi$, for all
de re propositions $\varphi$. Can we make rigorous sense of a
distinction between these two types of propositions?\footnote{Of
  course, we are hardly the first to ask this question. Indeed, there
  is a large literature on the distinction between qualitative and
  non-qualitative propositions \citep[see e.g.][]{kaplan, adams,
    rosenkrantz, dasgupta, cowling, hoffmann-kolss}. It seems to be
  widely agreed among metaphysicians that the distinction is
  intelligible and important, and yet that making it precise is
  elusive.  Several of the moves we make here will be familiar to some
  readers, though we do not attempt a general survey of the
  literature.  (For that, \citet{cowling} is especially helpful.)
  Instead, our goal is to show that it is unlikely that a sharp
  \emph{mathematical} distinction between qualitative and
  non-qualitative properties is available --- and thus, that the line
  of thought we are currently contemplating will not yield the sort of
  account we seek to defend.}

Hawthorne explains the distinction between two kinds of descriptions
as follows:
\begin{quote} The first --- the \emph{qualitative description} ---
  says everything that can be said about the intrinsic character of
  that history with one exception: it cannot name individuals or
  otherwise encode haecceitistic information about which particular
  individuals are caught up in that segment of world history. The
  second --- the \emph{de re description} --- includes the qualitative
  description and, in addition, all haecceitistic, singular
  information. \citep[p 239]{hawthorne} \end{quote} Granted, such a
distinction works fine in everyday life: in some contexts where we
have a name for an individual, we can give a specific description of
what properties that individual has, or we can give a general
description of something that has those properties. This is a
distinction between levels of generality.

And yet, for well known reasons, the distinction is not absolute.  For
instance, sometimes qualitative descriptions can contain singular
information. As Russell himself taught us, the sentence \begin{quote}
  (D) The current president of the US is bald. \end{quote} does not
contain the name of an individual, but is, in some sense, about a
particular individual. Similarly, Quine pointed out that a name such
as ``Pegasus'' could be replaced by a predicate ``pegasizes''
\citep{quine-on}.  Conversely, even names may not include all singular
information, at least in ordinary language:\footnote{In first-order
  logic, we regiment name usage to require names to refer uniquely.
  But that is a choice in how we set up our semantics, and it does not
  apply in ordinary language.}  ``James Weatherall'' may refer to a
philosopher of physics, or to his father, or his son, or to a former
All-American football player, or to a retired British Vice-Admiral.
In each of those cases, further descriptive information is needed to
disambiguate the reference of the name.

Perhaps more importantly, whether some description contains singular
information is apparently not a fact about the description itself, but
rather about the thing or things described. For example, ``Some
nobleman is bald'' could be about no individuals, or about one, or
about many.  And in any of these cases, it implies the less specific
proposition ``Somebody is bald''.  On these grounds, we are skeptical
about the idea that there is some significant distinction between
descriptions that encode information about individuals, and
descriptions that do not.

But let's try harder. Recall that a description is supposed to be
qualitative just in case it does not name any particular
individual. We will now consider several proposals for how to make
this idea precise. We assume that propositions are expressed, up to
logical equivalence, by sentences.  Thus, we assume that if sentences
are logically equivalent, then either they both name an individual, or
neither of them names an individual. If that were not the case, then
``$x$ names an individual'' would not be a property of the underlying
proposition, but only of some specific syntactic representation of
that proposition, and that would not help in defining a notion of
qualitative sameness of worlds.

\begin{ps} A sentence $\varphi$ names an individual just in case
  $\varphi$ contains a name. \end{ps}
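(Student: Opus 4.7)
The plan is to refute Proposal 1 by showing it violates the invariance-under-logical-equivalence requirement laid down in the paragraph immediately above. Since ``$\varphi$ names an individual'' must be a property of the \emph{proposition} that $\varphi$ expresses, and logically equivalent sentences express the same proposition, the proposed syntactic criterion ought to be stable under logical equivalence. I would then exhibit a pair of logically equivalent sentences that disagree on whether they contain a name, and conclude that the criterion collapses.

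First I would give the minimal counterexample. In first-order logic with equality, $a = a$ is a logical truth for any constant $a$, so it is equivalent to, say, $\forall x (x = x)$. The former contains the name $a$; the latter contains no name. If Proposal 1 is correct, one of these sentences names an individual while the other does not, despite their expressing the very same proposition.

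Next I would generalise, to head off the worry that the counterexample is an isolated trick. For any sentence $\psi$ and any name $a$ not already occurring in $\psi$, the conjunction $\psi \wedge (a = a)$ is logically equivalent to $\psi$. Hence names can be freely introduced into, or eliminated from, any sentence without altering its propositional content, so the syntactic criterion of Proposal 1 can never track a semantic notion of naming. A Russellian variant (already gestured at in the preceding discussion of ``pegasizes'' and of (D)) makes the same point from the other side: a sentence of the form $P(a)$ can often be rewritten, up to equivalence, as one containing no names but only a definite description.

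The main obstacle, I expect, is not the construction of counterexamples --- those are trivial --- but heading off the natural repair. A defender of the proposal might reply that $a = a$ is a degenerate case and insist that only ``essentially occurring'' names should count. But that move requires a prior notion of essential occurrence, which is itself a semantic rather than syntactic matter, and so it abandons the purely syntactic ambition of the original proposal. Pressing this point motivates moving on to a different proposal, presumably one that tries to locate the qualitative/non-qualitative distinction at a level other than sentence-level syntax.
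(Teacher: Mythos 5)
Your refutation is correct and matches the paper's own argument: both rest on the observation that $\varphi$ is logically equivalent to $\varphi \wedge (c=c)$, so the purely syntactic criterion is not invariant under logical equivalence and would make every sentence name an individual. The additional material on Russellian paraphrase and on blocking the ``essential occurrence'' repair is consistent with the paper's surrounding discussion but not needed for the core point.
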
 We have already seen arguments
against proposals of this sort, but stated in these terms it fails for
even simpler reasons, since it depends on superficial features of a
sentence that are not invariant under logical equivalence. Indeed,
$\varphi$ is logically equivalent to $\varphi \wedge (c=c)$, so by
this proposal, every sentence names an individual.

\begin{ps} A sentence $\varphi$ names an individual just in case there
  is a name $c$ such that for any sentence $\varphi '$, if $\varphi '$
  is logically equivalent to $\varphi$, then $\varphi '$ contains
  $c$. \end{ps}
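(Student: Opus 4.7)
The plan is to show that Proposal 2 fails, though for subtler reasons than Proposal 1. I would proceed in three steps.

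First, I would verify that Proposal 2 at least disposes of the counterexample that killed Proposal 1. If $\varphi$ does not contain $c$, then $\varphi$ itself is a sentence logically equivalent to $\varphi$ that lacks $c$. So no name is essential to $\varphi$ in the sense of Proposal 2, and in particular the trivially tacked-on conjunct $c = c$ no longer forces a verdict of ``names an individual.'' So the proposal clears the obvious hurdle.

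Second, I would exhibit a counterexample by way of equivalent formulations of a single theory. Consider group theory presented two ways: $T$ in signature $(\cdot, e)$, with $e$ a constant for the identity; and $T'$ in signature $(\cdot)$, with an axiom $\exists y \forall x (x \cdot y = x \wedge y \cdot x = x)$ positing an identity without naming it. The two theories are definitionally equivalent, characterizing the same structures up to the canonical translation. Now consider the sentence $\forall x(x \cdot e = x)$ of $T$ and its natural translation $\exists y \forall x(x \cdot y = x)$ of $T'$. The former names $e$ by Proposal 2, since no logically equivalent sentence in the signature $(\cdot, e)$ can omit $e$: swapping the interpretation of $e$ for a non-identity falsifies the sentence, but any sentence omitting $e$ cannot distinguish such models. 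The latter, in a signature with no constant symbols at all, trivially fails the existential clause and names nothing. So the same fact about groups lands on opposite sides of the qualitative/de re distinction depending only on the signature in which we happen to state it.

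The main obstacle is to convert this signature-sensitivity into a decisive objection rather than a mere observation. My reply would be that the distinction Hawthorne, Teitel, and Pooley are after is a distinction in \emph{content}---between two kinds of fact a world could make true---and that definitionally equivalent theories express exactly the same content. A proposal that draws the qualitative/de re line differently depending on the choice of equivalent formulation therefore cannot be a criterion on propositions themselves, only on their particular syntactic representatives. This failure looks structural: any purely syntactic criterion keyed to occurrences of constant symbols will be vulnerable to the same manipulation. The moral, which I would suggest motivates the next proposal in the sequence, is that a workable account must be semantic or model-theoretic rather than syntactic---or else, as the authors will eventually argue, the qualitative/de re distinction must be abandoned altogether in favor of the uniqueness-of-extensions criteria flagged in the abstract.
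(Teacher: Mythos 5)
Your refutation is correct, and it lands on the same underlying phenomenon as the paper's, but it packages that phenomenon differently and draws a different moral. The paper's own argument is a one-liner: define a predicate $\theta(x)\leftrightarrow(x=c)$ and replace the de re sentence $p(c)$ with the definitionally equivalent $\exists!x\,(\theta(x)\wedge p(x))$; since any name can be paraphrased away in this fashion, the proposal ``would make it essentially impossible for sentences to name individuals'' --- i.e., the criterion is (near-)vacuous, never satisfied. You instead hold the signature fixed, observe (correctly) that within the signature $(\cdot,e)$ no $e$-free sentence is logically equivalent to $\forall x(x\cdot e=x)$, so the proposal \emph{does} fire there, and then show the verdict flips under passage to a definitionally equivalent presentation without the constant. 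The hinge between the two arguments is whether ``logically equivalent'' is read as signature-bound (your reading, under which the proposal over-generates in a signature-relative way) or as tolerating definitional extension (the paper's reading, under which it under-generates to the point of vacuity). Both readings kill the proposal, and your closing diagnosis --- that any criterion keyed to occurrences of constant symbols cannot be a property of propositions, so one must go semantic or abandon the qualitative/de re line --- matches where the paper goes next. What the paper's version buys is economy and a sharper verdict (the condition is essentially never met); what yours buys is an explicit demonstration that the proposal is not even stable across equivalent formulations of one theory, which connects more directly to the paper's standing assumption that naming must be a property of the proposition rather than of a syntactic representative.
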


This proposal would make it essentially impossible for sentences to
name individuals. Indeed, one could define a predicate
$\theta (x)\leftrightarrow (x=c)$, and then replace the de re sentence
$p(c)$ with the definitionally equivalent
$\exists !x(\theta (x)\wedge p(x))$.

\begin{ps} Given a background theory $T$, a predicate $\theta (x)$
  names some individual relative to $T$ just in case
  $T\vdash \exists !x\theta (x)$. \end{ps}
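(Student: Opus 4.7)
The plan is to argue that this proposal, although an improvement on the previous two, still fails to match the intended qualitative/de re distinction. It handles the obvious defect of Proposal 1 (since the condition $T \vdash \exists! x\, \theta(x)$ is invariant under logical equivalence of $\theta$) and escapes the triviality of Proposal 2. So the first step is to acknowledge these virtues, then to locate where the proposal nonetheless collapses.

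The central step is to exhibit paradigmatically qualitative predicates $\theta(x)$ for which $T \vdash \exists! x\, \theta(x)$ in a natural background theory $T$. The textbook example comes from arithmetic: let $T$ be $\mathrm{PA}$ and take $\theta(x) := \forall y\,(x + y = y)$. Then $T \vdash \exists! x\, \theta(x)$, so $\theta$ names an individual (namely $0$) on this proposal. But $\theta$ contains no constant symbol, no name, no haecceitistic information---it is the Russellian definite description par excellence, a structural characterisation built entirely from general vocabulary. A parallel example works in $\mathrm{ZFC}$ for the empty set, via $\theta(x) := \forall y\, (y \notin x)$.

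From here I would generalise: in any categorical theory, or any theory with sufficiently many definable elements, essentially every individual is uniquely picked out by some purely qualitative predicate. So the proposal's extension includes a great many predicates that, by the lights of Hawthorne's own gloss on the distinction, should count as qualitative rather than as names. The proposal therefore conflates \emph{being a definite description in} $T$ with \emph{being a name}, which is precisely the conflation Russell warned us about---and this collapse undermines whatever work the qualitative/de re distinction was meant to do.

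The main obstacle will be to make this case without begging the question. One cannot simply assert that $\forall y\,(x + y = y)$ is qualitative; one must argue for this without leaning on the very distinction under assessment. My strategy is to appeal to uncontroversial paradigm cases that every participant in the debate grants to be qualitative---the additive identity, the empty set, the least element of a well-order---and then observe that the proposal misclassifies them. Attempts at repair, say by restricting to theories whose definable elements are ``few,'' or by adding that $\theta$ itself must be qualitative, either revive the syntactic problems of Proposal 2 or presuppose the very notion they were introduced to define.
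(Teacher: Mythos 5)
Your verdict is right --- the proposal fails because it overgenerates --- but your route to that verdict is genuinely different from the paper's, and the difference matters. The paper's refutation is a three-sentence model-theoretic observation: even when $T\vdash \exists !x\,\theta (x)$, there can be a model $M$ of $T$ with $M\vDash \theta (a)$ and another model $M'$ with $M'\vDash \neg\theta (a)$; that is, provable unique satisfaction does not make $\theta$ track any fixed individual across models, so nothing has been \emph{named}. Your refutation instead exhibits paradigmatically qualitative definite descriptions (the additive identity in $\mathrm{PA}$, the empty set in $\mathrm{ZFC}$) that satisfy the condition, and concludes that the proposal conflates definite descriptions with names. Both arguments are sound objections, but note what each buys. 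The paper's version is immune to the circularity worry you yourself flag: it never has to certify any predicate as ``qualitative,'' since it shows the condition fails to secure reference to an individual \emph{at all}. Your version leans on pre-theoretic judgments of qualitativeness that the target of the argument may simply refuse --- indeed the paper's own discussion of Russell's ``the current president of the US is bald'' concedes that descriptions built from general vocabulary can carry singular information, so a haecceitist could insist that $\forall y\,(x+y=y)$ is ``about'' zero in precisely the contested sense. There is also a mild mismatch in your choice of examples: in $\mathrm{PA}$ your $\theta$ provably coincides with the constant $0$ in every model, so it behaves as rigidly as a name could; it is therefore a counterexample to the proposal only on your reading of the defect (misclassifying the qualitative), not on the paper's (failure of cross-model stability). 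Your argument is worth keeping as a complementary point, but the paper's shorter observation is the one that does the work without presupposing the distinction under construction.
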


This condition is not strong enough. Even when
$T\vdash\exists !x\theta (x)$, there can be a model $M$ of $T$ such
that $M\vDash \theta (a)$, and another model $M'$ of $T$ such that
$M'\vDash \neg \theta (a)$. So, $T\vdash\exists !x\theta (x)$ does not
capture the idea that $\theta (x)$ names an individual.

We have tried, without any success, to give a formal (i.e.\
mathematical) account of the distinction between qualitative and de re
propositions. Perhaps our failure is unsurprising, since the
distinction between qualitative and de re descriptions was intended to
be a distinction in how those descriptions relate to the world, i.e.,
it is meant to be a distinction in what those descriptions are
\emph{about}, whereas we have been talking only about mathematical
objects (e.g.\ sentences of formal languages, models).

Here is another strategy.  Perhaps one could simply assume that there is a
relation $N(\varphi ,x)$ of ``naming'' that can hold for propositions
and concrete objects.

\begin{ps} A proposition $\varphi$ is \emph{qualitative} just in case
  $\neg\exists x N(\varphi ,x)$. \end{ps}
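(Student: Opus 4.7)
The plan is to show that this proposal merely relocates the problem rather than solving it, and that even taken at face value it misclassifies the cases it was meant to handle. My argument would proceed by first pressing the status of the relation $N$, and then by running the previously discussed counterexamples through the proposal.

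The first step is to observe that $N$, as introduced, is not a mathematical relation at all: its arguments are a proposition on one side and a \emph{concrete object} on the other, where the latter is explicitly not something living in any formal setup. Unlike the naming relations considered in the preceding proposals---where we could at least appeal to the occurrence of names in sentences, to logical equivalence classes, or to interpretations in models of a background theory---$N$ receives no formal specification. To say when $N(\varphi,x)$ holds we would have to cross exactly the word-world boundary that Section~\ref{sec:interpretation} has already argued formal methods cannot cross by fiat. Declaring that such a relation exists simply assumes the very distinction the Proposal purports to analyze.

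The second step is to argue that even granting $N$ as primitive, the proposal fails on its own terms. Russell's example ``The current president of the US is bald'' contains no name, yet is surely \emph{about} a particular individual---so it should stand in $N$ to that individual and hence count as non-qualitative, contrary to the natural intuition underlying Hawthorne's distinction. Conversely, ``Somebody is bald'' will stand in $N$ to an individual whenever the world happens to contain a unique bald person, so whether $\varphi$ is qualitative turns on contingent facts about what $\varphi$ ends up being about rather than on any intrinsic feature of $\varphi$. In particular, a single proposition can change its classification from world to world, which undermines any use of the notion in defining sameness of qualitative description \emph{across} worlds.

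The hard part, I expect, will be blocking a charitable retreat to some more syntactic reading of $N$---for instance, one on which $N(\varphi,x)$ holds just when some representative of $\varphi$'s equivalence class contains a name denoting $x$. Any such reading collapses back into one of the earlier Proposals: it either fails to be invariant under logical equivalence, or it is so invariant that no sentence ends up naming anything, or it depends on a background theory under which uniqueness of denotation is not preserved across models. The conclusion to flag is that no formal rendering of $N$ escapes this dilemma, which is precisely why this last Proposal should be seen as giving up on the formal project rather than completing it.
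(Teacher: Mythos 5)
Your first step is precisely the paper's own response to this Proposal: the authors dismiss it as having ``all the advantages of theft over honest toil,'' arguing that positing $N$ without any way to detect whether $\exists x\, N(\varphi,x)$ holds gives no guidance about which sentences are qualitative or which models are de re equivalent, and is therefore of no use for assessing whether a scientific theory is deterministic (cf.\ their closing analogy with Quine on synonymy and propositions). Your second and third steps go beyond what the paper says at this juncture, but they correctly recapitulate material the paper deploys elsewhere in the same section --- Russell's bald president, the world-relativity of ``Some nobleman is bald,'' and the collapse of any syntactic reading of $N$ back into the earlier, already-refuted Proposals --- so your attempt is consistent with, and essentially the same in approach as, the paper's treatment.
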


Such a proposal has all the advantages of theft over honest toil. But
to be more serious, we are still hung up on the practical concern of
whether the distinction would actually help us decide whether specific
scientific theories are deterministic, and saying that there is such a
relation $N$ does not tell us which sentences are qualitative and which
are de re. Nor would this proposal provide any concrete guidance on
the question of whether models are de re or qualitatively
equivalent. Metaphysicians would be within their rights to say ``there
is such a relation $N$'', but until they tell us how to detect that
$\exists xN(\varphi ,x)$ for a proposition $\varphi$, their proposal
is of no interest for understanding the metaphysical implications of
science.

Perhaps, you might think, we are straying too far from common
sense. Surely the following claim, which one might find in a science textbook, names a concrete physical object:
\begin{quote}
  (E) The perihelion of Mercury advances by approximately 43
  arcseconds per century more than what is predicted by Newtonian
  mechanics.
\end{quote}
But for reasons related to our arguments in section \ref{sec:interpretation}, we think this rests on a linguistic confusion.  Yes, the sentence involves a name in the colloquial sense.  But the sentence most certainly does not involve ``all singular, haecceitistic'' information about the planet closest to the sun.  It manages to convey the intended meaning only because of its integration with linguistic and scientific practice, our history and culture, established conventions, and so on.  Put another way, the term ``Mercury'' does not do anything, it is
people who do things. Names and sentences and propositions
do not name things, it is people who name them.  We may wish to include
an abstract intermediary, such as a proposition, that a person uses to
refer to something---and we might single out certain non-logical vocabulary as especially well-suited for playing a referential role in our languages. But without the person in the middle, the
proposition does not stand in an intrinsic relation to any concrete
thing.

We find that we have entered a wild and unfamiliar metaphysical
territory, far outside of our comfort zone. We have no carefully
worked out view of the metaphysics of propositions, and we do not
intend to take a stance on these issues. More power to them who
believe that they have a theory of propositions that will shed further
light on the elusive distinction between qualitative and full
equivalence of possible worlds.

What we do intend to take a stance on is that some definitions are more useful than others. For example, what does it mean to say that two sentences are synonymous?  As Quine pointed out, it is perfectly correct, and perfectly useless, to say that two sentences are synonymous if they express the same proposition. The problem is that nobody has ever figured out that two sentences are synonymous by comparing each of them with a proposition. In the same way, nobody will ever determine if a proposition $\varphi$ is de re by checking $\exists xN(\varphi ,x)$.


\subsection{Equivalence relations on models}

The attempt to distinguish qualitative from de re equivalence of
worlds via a distinction in kinds of descriptions strikes us as a dead end, at least from the perspective of trying to assess when a physical theory is deterministic.  Let's approach the issue from a different direction.  We are interested in two proposed senses of when two worlds ``agree''.  Perhaps we can capture this ``agreement of worlds''
more directly via models of a theory.

Hawthorne supplies a motivating example of worlds that are supposed to
be qualitatively equivalent but de re inequivalent.
\begin{quote} Consider a symmetrical world where there is a pair of
  qualitatively identical ships, one in each symmetrical half. Suppose
  the laws dictated that exactly one of the ships would sink, but left
  it undetermined which. Qualitative determinism might still hold of
  such a world, since the qualitative description of a world in which
  one ship sank need not depart in any way from the qualitative
  description of a world in which the other did. \citep[p
  243]{hawthorne} \end{quote} We share Hawthorne's intuition about
which counterfactual statements are true in this situation. In
particular, we think it is true that no matter which ship sinks, it
could have been the case that the other ship sank. And insofar as we
can specify the initial state of the world without specifying the
truth-value of future contingents, the initial state of the world does
not determine the final state. So there is clearly a sense in which
this example manifests a kind of indeterminism.

The key here is the phrase ``the other ship''. There are two ships to
begin with, and we can stipulate that one of them be called ``Lefty''
and the other ``Righty''. There is nothing in the description that
tells us which ship sinks --- it could be Lefty, or it could be
Righty. If Lefty does in fact sink, then it is still true that it
could have been Righty that sank. Indeed, there are \emph{two}
distinct possible final states of affairs, both of which are
compatible with the initial state of affairs.

All of this common sense talk can be shored up by formal model
theory. Consider a theory $T$ that says that there are exactly two
objects, two definite descriptions $L(x)$ and $R(x)$, and a single
predicate $P(x)$ to indicate which ship sinks at the latter
time. While a model $M$ of $T$ consists of a domain with two elements,
and extensions for $L,R$ and $P$, an initial segment of this model
simply omits the extension of $P$ --- i.e.\ it does not specify which
ship sinks. There is then a clear sense in which a single initial
segment can belong to two distinct models. For example, the initial
segment where $a$ is Lefty and $b$ is Righty belongs both to a model
where Lefty sinks, and to a model where Righty sinks. To be more
precise, a model where Lefty sinks is one where
$\exists x(L(x)\wedge P(x))$ is true, while a model where Righty sinks
is one where $\exists x(R(x)\wedge P(x))$ is true. Since models are
isomorphic only if they satisfy the same sentences, a model where
Lefty sinks is non-isomorphic to a model where Righty sinks.

There is also a clear sense in which any two models of $T$ are
qualitatively equivalent. The language of $T$ is the union of two
sublanguages $\{ P \}$ and $\{ L,R\}$, where the latter can be
considered as the ``haecceitistic'' vocabulary. In this case, we can
say that $M$ is qualitatively equivalent to $M'$ just in case $M|_P$
is isomorphic to $M'|_P$. It then follows that any two models of $T$
are qualitatively equivalent --- and this underwrites the intuition
that the process is qualitatively deterministic. We can then elevate
this idea into a proposal for how to determine when models of a theory
represent equivalent worlds.
\begin{ps} For models $M,M'$ of $T$, we say that $M$ and $M'$ are
  \emph{qualitatively equivalent} if $M$ and $M'$ are isomorphic qua
  structures for the qualitative sub-language. We say that $M$ and
  $M'$ are \emph{de re equivalent} if $M$ and $M'$ are isomorphic qua
  structures for the full language. \label{pooley} \end{ps}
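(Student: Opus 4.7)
The plan is to test Proposal \ref{pooley} against the running example, then to expose its dependence on a prior choice of vocabulary partition, and finally to locate the narrow regime in which it yields an intelligible distinction. Since the statement is definitional, what is at stake is adequacy rather than truth: does it reproduce the intended verdicts, and does it determine a distinction at all in the general case?

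First, I would check Hawthorne's ships. Partitioning the signature of $T$ into the qualitative sub-language $\{P\}$ and the haecceitistic vocabulary $\{L, R\}$, any two models of $T$ are isomorphic as $\{P\}$-structures, since each has exactly one element in the extension of $P$. Yet a model in which Lefty sinks is not isomorphic to one in which Righty sinks as full structures, since they disagree on $\exists x(L(x)\wedge P(x))$, a sentence preserved by isomorphism. So Proposal \ref{pooley} classifies the example as qualitatively deterministic but de re indeterministic, matching the verdict Hawthorne's example was designed to elicit.

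Second, I would press on the structural assumption. The proposal is only as sharp as the prior partition of the signature into a qualitative sub-language and a haecceitistic remainder, and the preceding subsection has already argued that no formal criterion is available for drawing this line. I would therefore exhibit a theory that admits two defensible partitions and show that they yield distinct equivalence relations on its models, with a well-chosen case giving different verdicts about whether the theory is qualitatively deterministic. This would show that, absent an antecedent criterion, Proposal \ref{pooley} does not on its own fix the distinction between $\qe$ and $\de$.

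Third, and this is the main obstacle, I would try to pin down the regime in which the proposal is genuinely informative. At one extreme, if the theory has no haecceitistic vocabulary at all, the qualitative reduct coincides with the full structure and $\qe$ collapses into $\de$. At the other extreme, when every element of every intended model is denoted by some constant of the theory, the situation is well-controlled: qualitative equivalence reduces to ordinary isomorphism of structures, while de re equivalence corresponds to isomorphism that fixes the named elements pointwise, so the distinction is intelligible, even if minimal. The delicate step is showing that the intermediate case---partial naming, as in the ships---is unstable under otherwise innocuous reformulations: adding or deleting a constant for an element common to every intended model can flip the verdict. That instability is what I expect to underwrite the section's stated conclusion that the qualitative/de re distinction only gives intelligible answers for theories with names for all objects.
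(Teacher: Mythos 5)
You have correctly identified that the statement is a definition, so the relevant question is adequacy, and your first step reproduces the paper's own check: split the signature of the ships theory into $\{P\}$ and $\{L,R\}$, observe that any two models are isomorphic as $\{P\}$-structures, and that models disagreeing on $\exists x(L(x)\wedge P(x))$ cannot be isomorphic for the full language, so Proposal \ref{pooley} yields ``qualitatively deterministic but not fully deterministic,'' exactly as intended. Where you diverge is in the diagnosis of the proposal's limits. You locate the trouble in the non-uniqueness of the partition into qualitative and haecceitistic vocabulary; the paper has already made that complaint in the preceding subsection about qualitative versus de re \emph{propositions}, and its objection to Proposal \ref{pooley} itself is different and sharper: to apply the proposal to a theory like GR one must first \emph{add} names, but the name-enriched theory $T^+$ is a \emph{different} theory from $T$ --- a new constant $c$ is a nomically isolated predicate $x=c$, and adding nomically isolated vocabulary without governing laws generically produces indeterminism (the ``GR+Bob'' discussion, and the toy Newtonian example in Section \ref{proofs}, where $T$ satisfies D3 but $T^+$ fails FD and even D3). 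So the proposal does not merely give partition-relative answers; it fails to adjudicate determinism for the \emph{original} theory at all. Your closing claim --- that the distinction is intelligible only for theories with names for all objects --- is the paper's stated conclusion, and your gloss of that regime is consistent with Proposition \ref{collapse}, where FD, D1, D2 and D3 collapse; but your proposed route to it (instability of the partially-named case under adding or deleting a constant) is a plan rather than an argument, and the paper's actual mechanism is theory individuation plus the nomic isolation of fresh names, which your sketch gestures at without making explicit.
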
 With this
distinction in place, we can then define:
\begin{fd}
  Let $T$ be $(\qs\cup \ds)$-theory. Then $T$ is deterministic just in
  case for any models $M,M'$ of $T$, and for any initial segments $U$
  of $M$ and $U'$ of $M'$, if $U$ and $U'$ are de re equivalent, then
  $M$ and $M'$ are de re equivalent. \end{fd} This definition gives
reasonable answers for the simple kinds of examples we have
considered. In particular, it explains the sense in which Hawthorne's
ship example fails to be fully deterministic.

Suppose now that we take Proposal \ref{pooley} and try to apply it to
a theory that purports to describe the actual world, e.g.\ General
Relativity. Here we face an obstacle: General Relativity does not
provide names for spacetime points. Can this be overcome?  In fact, something similar could be said of Hawthorne's theory. Hawthorne said ``consider a pair
of qualitatively identical ships'', and then \emph{we} proceeded to
name these ships ``Lefty'' and ``Righty''. We might say, then, that Hawthorne's original theory does not have names, in the technical sense of constant symbols assigned to domain elements, but we can extend that theory in a way that introduces names and breaks the symmetry between
the two ships. In doing so we introduce a new theory, which is such that every model of the original ``theory'' (i.e.\
``there are two ships and one sinks'') corresponds to two models of
the new name-enriched theory.

Imagine that you were learning General Relativity and the professor
said, ``a model of Einstein's Field Equations consists of a Lorentzian
manifold \dots '', and then you replied ``I'm going to add a name
`Bob', so that in each model, `Bob' refers to some particular
spacetime point.''  We have nothing against adding vocabulary to a
theory, but we insist that the result will typically be a
\emph{different} theory. In fact, GR+Bob has a predicate, `$x$ = Bob',
that is nomically isolated, i.e.\ it has no lawlike connections to any
other predicates or relations in the theory; and adding nomically
isolated predicates to a theory will necessarily result in an
indeterministic theory. For example, if N is the theory of inertial
particle motion in Newtonian absolute spacetime, then adding two
one-place predicates $A$ and $B$ will result in an indeterministic
theory --- since Newtonian mechanics does not say whether a particle
that satisfies $A$ at one time will satisfy $B$ at some later time.

Now, one might think that the issue in theses examples is not really
about adding names to a theory, because somehow the resources needed
to talk about individuals are already available without any
extension. Indeed, in typical discussions of the hole argument in
general relativity, the labels for spacetime points are not taken to
be part of the syntax of the theory, but are assumed to be added by
us, after we choose a specific spacetime $M$. i.e.\ we don't assume
that there is some name `Bob' that refers to a point in every
spacetime. Instead, once we are given a spacetime $M$, we note that it
is a non-empty set, and we say ``consider some $p\in M$''. We then
note that $p$ has some property $\varphi$ in $M$, but there is a
mathematical construction that generates another spacetime $M'$ based
on the same set, and in which $p$ does not have property
$\varphi$. The thought then is that $M$ and $M'$ represent de re
inequivalent worlds.

We could look at Hawthorne's example in the same fashion. Don't
introduce names `Lefty' and `Righty' at the start, but just write down
a model $M=\langle \{ a,b\},a\rangle$ for ``the theory of two ships,
where one sinks''. Clearly $M'=\langle \{ a,b\},b\rangle$ is also a
model for this theory, and we could imitate the hole argument by
constructing $M'$ from $M$. The fact that $M\neq M'$ as sets seems
then to underwrite the claim that there are two de re inequivalent
worlds.

But what is the basis for saying that $M$ and $M'$ represent de re
inequivalent worlds?  Consider a possible world with a pair of
objects, one of which is $P$. We can name the thing that is $P$ in
various ways: we could name it $a$, or we could name it $b$. So why
not take $\langle \{ a,b\},a\rangle$ and $\langle \{b,a\} ,b\rangle$
to be two different representations of the exact same world? Nor will
it help here to say: ``but $a$ and $b$ are not names of things, they
are the things themselves''. That is just a bit of nonsense, for
reasons already discussed.\footnote{Recall fn. \ref{urelement} and the surrounding discussion.}

The claim that $M$ and $M'$ represent de re inequivalent possibilities
is less solid than it might initially seem. How might it be justified?
The obvious answer here is that $M$ and $M'$ are set-theoretically
distinct: $M$ assigns $\{ a\}$ to $P$, while $M'$ assigns $\{ b\}$ to
$P$. This suggests the following proposal:
\begin{ps} Models $M$ and $M'$ represent de re equivalent worlds only
  if $M=M'$ in the sense of Zermelo-Fraenkel set
  theory. \label{id} \end{ps}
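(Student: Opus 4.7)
Since Proposal \ref{id} is a proposed necessary condition on a philosophical equivalence relation rather than a mathematical theorem, what is needed is not a derivation but a defense of its adequacy to the intended notion. The plan is to argue, in the spirit of the hole-argument reasoning just sketched and in continuity with Hawthorne's ship example, that any genuine de re difference between worlds must leave a set-theoretic trace in the models representing them.

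I would begin by noting that we have already settled, via Proposal \ref{pooley}, that isomorphism captures the qualitative layer. Since the de re notion is meant to be strictly finer than the qualitative one, we need an equivalence on models that distinguishes at least some isomorphic pairs. The maximally natural such equivalence --- and arguably the only one available purely from the ambient set theory --- is identity itself. So Proposal \ref{id} simply records the minimal condition compatible with the motivating examples: Hawthorne's $\langle\{a,b\},a\rangle$ versus $\langle\{a,b\},b\rangle$, and the diffeomorphism-related pairs familiar from the hole argument in General Relativity.

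To argue the ``only if'' direction, I would reason by contrapositive. Suppose $M\neq M'$ as ZF sets. Then either their domains, or their distinguished operations, or their interpretations of some predicate differ as sets. Under the reading of domain elements as the particular individuals whose haecceities are at stake, such a difference translates directly into a de re difference between the worlds represented, since it amounts to some element playing a role in $M$ that no element plays in $M'$. Hence de re equivalence forces $M=M'$, which is exactly the proposal.

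The hard part --- and the opening I expect the authors to exploit --- is that the entire argument presupposes that domain elements can serve as de re representatives of worldly individuals, a supposition flatly rejected in footnote \ref{urelement} and in the discussion surrounding ``$a$ and $b$ are not names of things, they are the things themselves''. Without that presupposition, set-theoretic identity of $M$ and $M'$ is a feature of our choice of representation, not of what is represented. I therefore anticipate that Proposal \ref{id} will be introduced only to be undermined, either by collapsing into the triviality that distinct sets are distinct, or by delivering absurd verdicts --- e.g., that relabelling the underlying set of a relativistic spacetime yields a de re inequivalent world, and so that essentially no pair of distinct models of any physical theory ever represent de re equivalent worlds.
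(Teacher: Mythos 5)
Your treatment matches the paper's: Proposal \ref{id} is not proved but introduced as the ``obvious'' way to ground de re inequivalence in set-theoretic distinctness of models, and then rejected on exactly the grounds you anticipate --- that set-theoretic identity is a representational artifact, so the proposal supplies far too many worlds (is the first ship $a$ or $\{a\}$ or $\{\{a\}\}$?) and yields absurd verdicts, e.g.\ an ``indeterministic'' variant of Special Relativity distinguished only by whether $\{\emptyset\}$ belongs to the spacetime. Both your motivating contrapositive and your diagnosis that the whole line presupposes domain elements as de re representatives of individuals (the presupposition dismissed around footnote \ref{urelement}) track the paper's own discussion, so there is nothing substantive to correct.
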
 We ourselves are inclined to think that
set-theoretic identity of models is of no significance here; we think
that isomorphism of models is the only significant standard of
sameness.\footnote{Reasons for thinking this in a related context are
  given by \cite{regarding} and \cite{bradley}; other reasons, related
  to the arguments in section \ref{sec:interpretation}, are discussed
  below.  Ultimately, the issue is that insisting on set-theoretic
  identity of models involves layering interpretation on
  interpretation in a way that fundamentally confuses what the theory
  expresses.  Perhaps ironically, given the subsequent literature,
  while \citet{kaplan} both introduces ``haecceitism'' to contemporary
  philosophy and also suggests, in the somewhat different context of
  Kripkean modal semantics, that set theoretic identity is a
  reasonable way to represent haecceitistic facts, he \emph{also}
  suggests that the anti-haecceitistic philosopher has an easy way to
  avoid taking set theoretic identity to have this significance,
  viz. by introducing suitable isomorphisms.  All of the theories we
  consider have such isomorphisms defined.  We also note that
  \citet[p. 103]{pooleyOld} makes a closely related observation, and
  also cites Kaplan as recognizing that even when these isomorphisms
  are introduced, the sets we use to build models ``will be open to a
  haecceitistic misinterpretation''.}  But we do not need to convince
you of that stronger position.  It is enough to point out that
Proposal \ref{id} supplies far too many possible worlds, viz.\ as many
as there are sets. For example, when Hawthorne begins describing his
example by saying ``consider two ships'', we are left wondering
whether the first ship is $a$ or $\{ a\}$ or $\{ \{ a\} \}$ or \ldots
We cannot speak for metaphysicians, but no physicist would say that it
matters which element $a$ we use to reprsent the first ship, as long
as we use a different element $b$ to represent the second ship. In
fact, the reason we write ``$a$'' and ``$b$'' as opposed to something
like ``$\{ \emptyset \}$'' and ``$\{ \emptyset ,\{ \emptyset \}\}$''
is because we have no good reason to choose one set over another to
stand in for our ships. We are not really supposed to consider the
first ship to be a \emph{specific} object in a model of ZF set theory,
but as a generic element of a two-point set. More to the point,
standard practice in mathematical physics allows that the same
possibility be represented by distinct sets. Nobody ever asks whether
a spacetime $M$ contains $\{ \emptyset \}$ --- because that would be a
representational artifact.

Besides, even if a spacetime model really
is a set, we should be skeptical that that should make a difference to whether general relativity is deterministic.  After all, the working physicist never cares which set it is, as long
as it has the right structure, and this structure is specified by the
syntax of the scientific theory --- not by set theory. Therefore,
set-theoretic equality of models should not be viewed as a necessary condition for de
re equivalence of the corresponding worlds, since set theoretic equality is irrelevant to any sort of physical equivalence. Proposal \ref{id} only
sounds promising until one thinks about how many different sets there
are, and about how irrelevant it is which set we use.

We've been trying to figure out when models of a theory represent de
re equivalent possibilities. The first proposal along these lines, proposal \ref{pooley}, was that we add names
up front (enriching the syntax of the theory), and declare that models
represent de re equivalent possibilities only if they are isomorphic
relative to the name-enriched language. But that proposal won't help
determine whether the original theory is deterministic, since adding
names to a theory results in a new theory. The second proposal, proposal \ref{id}
attempts to cash out de re equivalence in terms of set-theoretic
identity of models. Like the first proposal, this second proposal
tacitly replaces the original theory, with its criterion for identity
of models (viz.\ isomorphism), with a new theory that recognizes more
properties of models and fewer isomorphisms between them. For example,
the Special Theory of Relativity says that spacetime is a
four-dimensional affine space with a Lorentzian metric. It does not
recognize any difference between two such affine spaces, even if there
are set-theoretic differences --- e.g.\ the first contains
$\{ \emptyset \}$ and the second does not. Now, one could adopt a new
theory in which each distinct set of cardinality $2^{\aleph _0}$
represents a physically inequivalent possible world. This new theory
is indeterministic because there are two spacetimes $M$ and $M'$ that
are identical up to $t=0$, but where $\{ \emptyset \}\in M$ while
$\{\emptyset \}\not\in M'$. We will not argue that this new theory is
incoherent, but we insist that it is not the Special Theory of
Relativity.

We've been at pains to argue that the distinction between
``qualitative'' and ``de re'' equivalence is too opaque to be of much
use in deciding whether a theory is deterministic. That being said, we do agree with
Hawthorne, Pooley, Teitel, and others that Lewis' criterion for
determinism, combined with ``isomorphism is equality'', fails to
detect some failures of determinism. Where we differ is that on our view, the problem is not that we
need a binary relation that is finer-grained than isomorphism. Instead, it is
that we need to keep track of the functions, i.e.\ the isomorphisms,
that ground the claim that two models are isomorphic. While previous
discussions have analyzed determinism in terms of binary relations
between worlds, a more illuminating analysis can be carried out in
terms of functions between worlds --- what Butterfield and Belot call
``duplications''.

\section{Lewis formalized} \label{d3}

Distinguishing senses of determinism by distinguishing types of
descriptions or equivalence classes of models did not get us very far.
But in fact, we think something very much in the spirit of the
distinction that Hawthorne, Teitel, and others have aimed at is
available, and that an adequate definition of determinism that
captures something like the intuition behind Full Determinism,
properly understood, is available.  To see it, though, requires us to
shift back to the other post-Lewisian thread, the one that was
prematurely cut off by \citet{belot-do,belot}. We intend to take this
line of development back up, and to argue that it is a genuine
problem-solver for questions about determinism. In other words,
insofar as Lewis was continuing the Carnapian program of explication,
then he was on the right track.

This line of development consists of three positive papers, and two
negative papers. On the positive side, Butterfield
\citeyearpar{butt-sub,butt-ein,butterfield} points out that there is
an imprecision in Lewis' talk of ``diverging worlds''. Since Lewis'
counterpart theory entails that no individual can exist in two worlds,
distinct worlds can never really have overlapping initial
segments. Butterfield then proposes that Lewis needs a notion of a
\emph{duplication map} $g:U\to U'$, where $U$ is an initial segment of
$W$, and $U'$ is an initial segment of $W'$. With this notion in hand,
Butterfield explains that there are two precisifications of Lewis'
notion of determinism, a stronger one ``DM2'', and a weaker one
``DM1''. Roughly speaking, DM2 says that if there is a duplication
$g:U\to U'$ of initial segments of worlds $W,W'$, then there is a
duplication $f:W\to W'$ of these worlds.\footnote{This discussion
  occurred in the context of the hole argument, and so the definitions
  were originally stated in terms of manifolds, tensors, and smooth
  mappings.  But the structure of Butterfield's definitions are
  independent of these details.  For further development in those
  terms, specifically for GR, see \cite{four-man}.} Then Butterfield
goes on to apply this analysis to the hole argument, and shows that GR
satisfies DM2.

This is an amazing outcome. Butterfield seems to have blocked the hole
argument, vindicating substantivalism and Lewisian counterpart theory,
not to speak of formal definitions of determinism. But then along came
\citet{belot}, who showed that even Butterfield's stronger condition
DM2 was too liberal by giving an example of a clearly indeterministic
process that satisfies it. He then went on to give two refined and
strengthened versions of DM2, but immediately provided counterexamples
to them. The upshot seems clear: do not try to turn Lewis'
metaphysical definition of determinism into a Carnapian explication,
because formal definitions will never capture the full sense of
determinism. At least that seems to have been the lesson that many
philosophers --- Belot included --- took away from his arguments.

We have already argued against this general posture.  But we also
think Belot's arguments fail on their own terms.  That is, we think
that Belot himself gave a promising formal definition of
determinism. We will now argue that his ``counterexample'' is nothing
of the sort. It does \emph{not} show the inadequacy of his
precisification of Lewis' diverging-worlds definition of determinism.

Belot's (1995b) first definition of determinism is essentially a
direct transcription of Butterfield's DM2:
\begin{quote} \textbf{D1:} A world $W$ is deterministic if, whenever
  $W'$ is physically possible with respect to $W$ and $t,t'$, and
  $f:W_t\to W_{t'}$ are such that $f$ is a duplication, there is some
  duplication $g:W\to W'$. \end{quote} Belot argues that D1 does not
capture determinism in its fullest sense, since there are
indeterministic processes that are D1-deterministic. Belot presents an
example of a centrally loaded column that must buckle in some
direction --- but which direction is undetermined. Since any two final
states are related by a duplication map, this example satisfies D1
since. However, the example still seems to be indeterministic in some
sense.

After dismissing D1, Belot considers the following strengthened
criterion:
\begin{quote}
  \textbf{D2:} $W$ is deterministic if, whenever $W'$ is physically
  possible with respect to $W$, and $t,t'$, and $f:W_t\to W'_{t'}$ are
  such that $f$ is a duplication, there is some duplication
  $g:W\to W'$ whose restriction to $W_t$ is $f$.
\end{quote}
The key difference between D1 and D2 is that the latter requires a
relationship between the duplication $g:W\to W'$ and the duplication
$f:W_t\to W_{t'}$, viz.\ $f$ is the restriction of $g$ to $W_t$. The
fact that D2 is genuinely stronger than D1 depends on the assumption
that ``agreement'' can be witnessed by various functions. Indeed, if
agreement were a binary relation on worlds (or world segments), then
D1 would imply D2. This might explain why modal metaphysicians have
overlooked D2 (or the even stronger version D3, that we will soon
consider). Modal metaphysicians have tended to think in terms of
binary relations on worlds, whereas D2 asks us to keep track of
different ways that worlds can be matched with each other.

Belot then provides a counterexample to D2: a single $\alpha$ particle
decays into two identical $\beta$ particles. This example satisfies
D2: if $f$ is a duplication of initial segments, then $f$ can be
extended to a duplication of worlds. In fact, since the $\beta$
particles are assumed to be identical (i.e.\ related by a symmetry),
$f$ can be extended in two ways. But this very non-uniqueness of
extensions makes Belot (and us) think that the process is not actually
deterministic. If $\beta _1$ and $\beta _2$ are the particles in one
world, then we share Belot's intuition that the following
counterfactual is true:
\begin{quote}
  (*) $\beta _1$ could have been $\beta _2$. \end{quote} (Note,
however, that the truth of (*) does not depend on the existence of a
distinct world.)

We agree, then, that D2 does not capture determinism in the strongest
sense. But Belot has yet another proposal.\footnote{Belot's D3 seems
  to have been independently rediscovered by \citet{landsman} and
  \citet{cudek}.}
\begin{quote}
  \textbf{D3:} A world $W$ is deterministic if, whenever $W'$ is
  physically possible with respect to $W$, and $t,t',W'$ and
  $f:W_t\to W'_{t'}$ are such that $f$ is duplication, then there is
  exactly one duplication $g:W\to W'$ which extends~$f$.
\end{quote}
That is, D3 strengthens D2 by requiring uniqueness of extension of a
duplication map of initial segments.

Some of the details of D3 are inessential, and it can easily be made
into a schematic that applies to just about any scientific theory. For
example, while D3 is formulated in terms of possible worlds, we will
sometimes talk instead about models (of a theory). Similarly, D3 takes
the determiner to be a time-slice $W_t$, but we can take it to be
other parts of a world or a model, e.g.\ an initial segment of a
possible world \citep[see][]{lewis-work,butterfield}, or an initial
data surface embedded in a four-dimensional manifold
\citep[see][]{landsman}. The details may differ, but all of these
cases conform to the following schematic:\footnote{As we note later,
  this schematic looks suggestively like a version of implicit
  definability.}
\[ \begin{tikzcd}
    M \arrow[r, "g", dashed] & M' \\
    U \arrow[u, "i"] \arrow[r, "f"'] & U' \arrow[u,
    "i'"'] \end{tikzcd} \] Here $i:U\to M$ and $i':U'\to M'$ are the
embeddings of initial segments into the entire history, and
$f:U\to U'$ is an isomorphism of initial segments. D3 then says:
determinism holds just in case any isomorphism of initial segments
extends uniquely to an isomorphism of worlds.

In what follows, we will argue that D3 captures the fullest sense of
determinism that is reasonable to apply to a physical theory. For now,
we will simply point out that D3 holds for General Relativity. To be
more precise, D3 holds for four-dimensional, maximal, globally
hyperbolic, vacuum solutions to Einstein's equation. This result
follows from the conjunction of the Choquet-Bruhat-Geroch theorem and
Geroch's \citeyearpar{geroch} rigidity theorem --- see the accounts in
\citep{landsman} and \citep{dialogue}.\footnote{A more careful
  treatment of determinism in GR is given in \cite{four-man}. Our
  attitude here should be clarified by what we say in section
  \ref{sec:interpretation} about electromagnetism. The difference
  between two versions of EM manifests itself in different choices of
  isomorphisms between models; and one of these two choices leads to a
  better theory (in our opinion, and in the opinion of most
  physicists).}

\section{Belot against D3}

We think that D3 is an excellent definition of when a theory is
deterministic. But that is not the conclusion that its architect,
Belot, drew. In fact, Belot's criticism of D3 made it all but
invisible to philosophers for thirty years, until it reappeared in
work by \citet{close,landsman,cudek}. In this section, we consider
Belot's purported counterexample to D3, and we show that it
underspecifies the relationship between spacetime and its material
contents.
\begin{quote}
  In this example, $W$ is a world with spacetime points and Newtonian
  spacetime structure. It initially contains a single $\alpha$
  particle. The laws of nature decree that one year later, at $t=1$,
  the $\alpha$ particle decays into continuum many $\beta$ particles;
  arranged so that at time $t$, the $\beta$ particles form a spherical
  shell of radius $t$; with each $\beta$ particle moving away from the
  center of the sphere along its radius.\footnote{We changed the time
    scale for ease of exposition.} \citep[p 193]{belot} \end{quote}
Let's follow Belot's own presentation of the conflicting claims that
this world is deterministic according to D3, but is actually
indeterministic.

Belot's argument that $W$ satisfies D3 relies on the following claims:
\begin{enumerate}
\item Duplications must preserve metric relations between spacetime
  points. Hence, a duplication map from $W$ to itself must be a
  symmetry of Newtonian spacetime.
\item Duplications must preserve the relation of ``$x$ is located at
  $y$'' that holds between a material object and a spacetime point.
\item The only Newtonian symmetry that preserves the worldline of the
  $\alpha$ particle is a rotation around the timelike line that
  extends that particle's trajectory.
\end{enumerate}
Supposing that these three claims hold, then if $f$ is a duplication
of $W$'s initial segment, then $f$ is a rotation. It follows that $f$
has a unique extension to all of the spacetime points of $W$; and
since location relations must be preserved, $f$ has a unique extension
also to material objects. Therefore the extension of $f$ is unique,
and D3 is satisfied.

Belot then argues that the example should be conceived of as
indeterministic. His argument relies on the following claim:
\begin{quote}
  There is a legitimate counterpart relation $g'$ (not a duplication)
  that is the identity on spacetime points, but not the identity on
  material objects. That is, $g'$ does not preserve the relation ``$x$
  is located at $y$.''\footnote{``$g'$ is a counterpart relation that
    tells us that $\beta _1$ could have been $\beta _2$ etc. $g$ and
    $g'$ license us to say that there are two possible futures for
    $W_t$.''}
\end{quote}
We agree completely that if $g'$ is ``a legitimate counterpart
relation'', then the correct judgment is that the example is
indeterministic. In fact, the existence of two counterpart relations
$g$ and $g'$, both of which extend $f$, would indicate a failure of
D3. The problem is not with D3 as a criterion, but with an ambiguity
about which functions from $W$ to itself are relevant for deciding
whether $W$ is deterministic. That is, whether or not a world is
deterministic depends crucially on which functions from that world to
itself are actually duplications. Nor do we think that this question
--- what are duplications? --- can be settled by apriori
metaphysics. Formulating a scientific theory includes a specification
of which maps between models are isomorphisms (i.e.\ duplications),
and that specification is decisive for whether that theory is
deterministic.

We have seen that there are two ways to read Belot's example: either
the relation between $\beta$ particles and spacetime points is
nomologically necessary, or it is not. If this relation is
nomologically necessary, then the symmetries of a spacetime are
uniquely paired with symmetries of its material contents. If this
relation is not nomologically necessary, then there could be a
spacetime symmetry that is not uniquely paired with a symmetry of its
material contents. A theory that describes the former situation can be
expected to satisfy D3, while a theory that describes the latter
situation can be expected not to satisfy D3. In this section, we will
give simple examples of such theories.

Newtonian spacetime has the feature that spatial points maintain their
identity over time, and so it makes sense to talk about whether an
object is changing its position over time. (This in contrast to
Galilean spacetime.) Newtonian spacetime also has a rather small group
of symmetries: uniform shifts and rotations around timelike (vertical)
lines. These features of Newtonian spacetime play an important role in
the setup of Belot's example, but we can capture the essential points
with a simple toy model. It will suffice to have three locations
(left, center, and right), and two times ($t=0$ and a final time
$t=1$). We won't need shifts (which play no role in Belot's example),
but we allow for rotations around the center location, i.e.\
permutations of right and left.

There are two ways to set up such a framework, which are equivalent
for Newtonian spacetime, and only come apart for more sophisticated
examples.
\begin{enumerate}
\item Represent spacetime by a family of types $S_t$, with $t$ a time
  parameter, and postulate a ``persistence'' relation (the analogue of
  an affine connection) between the types.\footnote{Here we use
    ``type'' and ``sort'' synonymously, both in the sense of
    many-sorted logic \citep[see][]{lps}. We are using this framework
    for its flexibility, and without any commitment to type theory as
    a foundation of mathematics.} The persistence relation can be
  represented by isomorphisms with compatibility relations. To
  represent Newtonian spacetime, we assume a unique isomorphism
  $\delta _{t,t'}:S_t\to S_{t'}$.
\item Represent space by a type $S$ and time by another type $R$, so
  that spacetime is represented by the product type $R\times
  S$. \end{enumerate} The advantage of the first, more complicated,
setup is that it generalizes more easily to other spacetime theories,
e.g.\ Galilean spacetime. The advantage of the second setup is that we
don't have to keep track of sorts. Since Belot's example assumes
Newtonian spacetime, we will start with the second approach.

\begin{figure}
\begin{tikzpicture}
  \draw[thick] (0,0) rectangle (6,4); 
  \draw[thick] (0,2) -- (6,2); 
  \draw[thick] (2,0) -- (2,4); 
  \draw[thick] (4,0) -- (4,4); 

  \node[font=\Large\bfseries] at (3,1) {$\alpha$};
  \node[font=\Large\bfseries] at (1,3)
  {$\beta_1$}; 
  \node[font=\Large\bfseries] at (5,3) {$\beta_2$}; 
  \end{tikzpicture}
  \caption{Space with three places and two times, with two $\beta$
    particles at the later time. Since symmetries are assumed to fix
    $\alpha$, the center blocks can be omitted from the model without
    changing the conclusions we draw.}
\end{figure}

Let $\Sigma$ be a signature with a sort symbol $S$ for spatial
points. We could then add the axiom that there are three things of
sort $S$, corresponding to the three particle positions. But it is
simpler just to ignore that $\alpha$ particle, which effectively
defines a constant symbol (a name for its location). Thus we take as
our first axiom:
\begin{quote}
There are two things of sort $S$.
\end{quote}
We now let $B$ be a sort symbol for the $\beta$ particles, and we add
a second axiom:
\begin{quote}
There are two things of sort $B$.
\end{quote}
The crucial decision now is whether to add an additional relation
symbol that relates $\beta$ particles to spatial points. To do so
would amount to positing a lawlike relation between material objects
and spatial points --- a relation that must be preseved by all
symmetries of models and all isomorphisms between models.

Suppose first that we do \emph{not} add any such relation, and let $T$
be the bare theory of spatial points and $\beta$ particles. It is clear
that $T$ does not satisfy D3: any symmetry $g$ of spatial points is
compatible with two distinct symmetries of $\beta$ particles (the
identity and the flip).

Suppose now that $L$ is a relation between $\beta$ particles and
spacetime points, and let $T_q$ be the theory that says that each
$\beta$ particle is located at a unique spatial point at $t=1$. A
model $M$ of $T_q$ consists of two two-element sets $S^M$ and $B^M$,
and a one-to-one correspondence $L^M$ between $B^M$ and $S^M$. It is
easy to see that the models of $T_q$ satisfy D3. Indeed, let
$i:S\to M$ and $i':S'\to M'$ be embeddings of initial conditions. Here
$S$ and $S'$ are sets with two elements, and $M$ (respectively $M'$)
adds a second set $B$ (respectively $B'$) and an isomorphism
$L:B\to S$ (respectively $L':B'\to S'$). Any isomorphism $f:S\to S'$
of space is compatible with precisely one isomorphism
$L'\circ f\circ L$ of $\beta$ particles:
\[ \begin{tikzcd}
  B  \arrow[r, dashed]     & B' \\
  S\arrow[u, "L"] \arrow[r, "f"'] & S' \arrow[u, "L'"']
\end{tikzcd} \] Therefore, $T_q$ satisfies D3.

The theories $T$ and $T_q$ illustrate the obvious fact that whether or
not a world $W$ is deterministic depends on the relations that world
bears to other worlds. If all nomologically possible worlds have the
same pairing of $\beta$ particles and spacetime points, then the
process of beta decay is deterministic. If some nomologically possible
worlds have different pairings of $\beta$ particles and spacetime
points, then the process of beta decay is indeterministic. This
difference in verdicts is not a bug in definition D3; it is a feature
of D3 that it depends on a precise specification of permitted
duplications.

There is yet another theory that could be said to capture Belot's
example. Let $T_h$ be a theory with a sort symbol $S$ for spatial
points, and with a name $b$ for one of the two $\beta$ particles. We
consider $\ulcorner x=b\urcorner$ to represent the property that the
spatial point $x$ is occupied by $b$ at $t=1$. In this case, a model
$M$ of $T_h$ consists of a set $S^M$ with two elements, and a
distinguished element $b^M\in S^M$. If $M$ and $M'$ are models of
$T_h$, then an isomorphism $g:M\to M'$ consists of a function from $S$
to $S'$ such that $g(b^M)=b^{M'}$. It follows that there is a unique
isomorphism between any two models of $T_h$.

For $T_h$, an initial condition is a set $S$ with two elements, while
a final condition is the same set $S$ plus the choice of one of the
two elements $b^M\in S$. This choice leads to a reduction of symmetry,
and so to a violation of D3. Indeed, consider the isomorphism
$f=1_S:S\to S$ of initial conditions. Let $b^M\in S$, and let $b^{M'}$
be the other element of $S$. Then there is no isomorphism $g:M\to M'$
that completes the following diagram:
\[ \begin{tikzcd}
    M \arrow[r, dashed, "g"] & M' \\
    S \arrow[u, "i"] \arrow[r, "f"'] & S \arrow[u,
    "i'"'] \end{tikzcd} \] Therefore, $T_h$ does not satisfy D3.

Let's take stock. The theories $T,T_q$ and $T_h$ describe similar
worlds: at the initial time there are two locations and no material
particles; and at a subsequent time, each location is occupied by a
distinct $\beta$ particle. But there is a subtle difference in how
these theories describe the world: $T_q$ fixes the relationship
between $\beta$ particles and spatial points, and its symmetries
preserve this relationship. In contrast, $T$ and $T_h$ permit
different matchings between $\beta$ particles and spatial points, and
the symmetries of spatial points are not uniquely paired with
symmetries of $\beta$ particles.

Belot's clever example underspecifies the relationship between
material particles and spatial points. According to one specification,
material particles are nomologically tied to particular spatial
points. In this case, the world $W$ is unambiguously
deterministic. According to another specification, one and the same
material particle can be located at different spatial points; and, in
this case, a symmetry of space does not extend uniquely to a symmetry
of its material contents. In this latter case, the world $W$ should be
judged to be indeterministic. The lesson is that determinism depends
on which relations are preserved by duplications, and theories tell us
what those relations are.

We conclude this section on an ironical note: Belot argues that
``determinism is not a formal property of uninterpreted theories.''
But the example he describes cannot be judged either to be
deterministic or indeterministic until we are told precisely which
mappings between worlds are duplications, and this latter
specification is a formal property of a theory.

\section{Bridging a non-existing gap}

\citet{teitel} argues that metaphysicians have an important job in
uncovering what modal-metaphysical commitments might be required to
maintain the consistency of spacetime substantivalism with full
determinism. He poses the challenge as ``bridging the gap'' between
qualitative and full determinism.\footnote{In more recent work, \citet{TeitelSubstantivalist} has backed away from this challenge.  Instead, he bites the bullet and accepts that the haecceitist substantivalism that he prefers is indeterministic, properly construed.  But he also maintains that any viable metaphysic that can make sense of the modal requirements imposed by ``sophisticated'' substantivalists suffers from its own challenges related to determinism, viz., that they make determinism too ``cheap'' by denying the metaphysical possibility that given spacetime points could have had different properties than they do.  There is much to say in response to this rich paper, but for present purposes his earlier work is a better foil to make our point---even if he no longer defends the views we cite.}
\begin{quote} We need a doctrine that \dots bridges the crucial gap
  between GR's qualitative determinism and its full determinism
  (thereby resolving both the original hole argument and my revised
  hole argument). \citep[p 379]{teitel} \end{quote}
\begin{quote} Any of those three anti-haecceitistic doctrines suffices
  to bridge the gap between GR's qualitative and full
  determinism. \citep[pp 359-360]{teitel} \end{quote}
\begin{quote} I deliberately set up the issues surrounding the hole
  argument by directly discussing modality and which doctrines imply
  the right modal correlations to bridge GR's qualitative and full
  determinism, rather than following the standard practice in the
  literature of theorizing primarily in terms of mathematical solution
  spaces and what we use them to represent. \citep[p
  388]{teitel} \end{quote} We agree that if there were a gap between
the sense in which GR is deterministic and some more metaphysically
significant kind of determinism, then it would be worth inquiring into
what metaphysical commitments are needed to bridge this gap. We claim,
however, that there is no such gap.\footnote{Of course, this does not
  mean we do not recognize different senses, or ``strengths'', of
  determinism.  For instance, Belot's D1 and D2 are weaker than our
  preferred D3, and may well be viewed as capturing senses of
  ``qualitative determinism''.  The crucial point is that GR is
  deterministic in a stronger sense than either of these, and so there
  is no gap for determinism in GR.}

The hole argument raises many technical issues that are beyond the
scope of this paper. Fortunately, the literature of the past thirty
years offers numerous toy examples that are supposed to be analogous
to GR in being qualitatively, but not fully, deterministic.  (See
Figure \ref{fig2}). We have encountered two already: Belot's $\beta$
particles and Hawthorne's ships. The doubly-symmetric world described
by \citet{melia} provides yet another.  We will now show how Hawthorne
and Melia's examples, like Belot's, illustrate the distinction between
D1 and D3.

\begin{figure}[h]
\begin{tikzpicture}
    \draw[thick] (0,0) circle (3cm);
    \node at (0,3.5) {\textbf{qualitative determinism}};

    \draw[thick] (0,0) circle (1.5cm);
    \node[align=center] at (0,0.2) {\textbf{full} \\ \textbf{determinism}};

    \node (GR) at (-1.8,1.2) {?};
    \node (Ships) at (1.8,1.2) {\textbullet};
    \node (Belot) at (-1.6,-1.4) {\textbullet};
    \node (Melia) at (1.6,-1.4) {\textbullet};

    \node[left] (GR_label) at (-4,2) {GR};
    \draw[->, dashed] (GR_label) -- (GR);

    \node[right] (Ships_label) at (3.5,1.2) {Hawthorne's ships};
    \draw[->] (Ships_label) -- (Ships);

    \node[left] (Belot_label) at (-4,-1.5) {Belot's $\beta$ particles};
    \draw[->] (Belot_label) -- (Belot);

    \node[right] (Melia_label) at (3,-1.5) {Melia's symmetric world};
    \draw[->] (Melia_label) -- (Melia);

  \end{tikzpicture}
  \label{fig2}
  \caption{Theories that are supposed to be qualitatively, but not
    fully, deterministic.}
\end{figure}
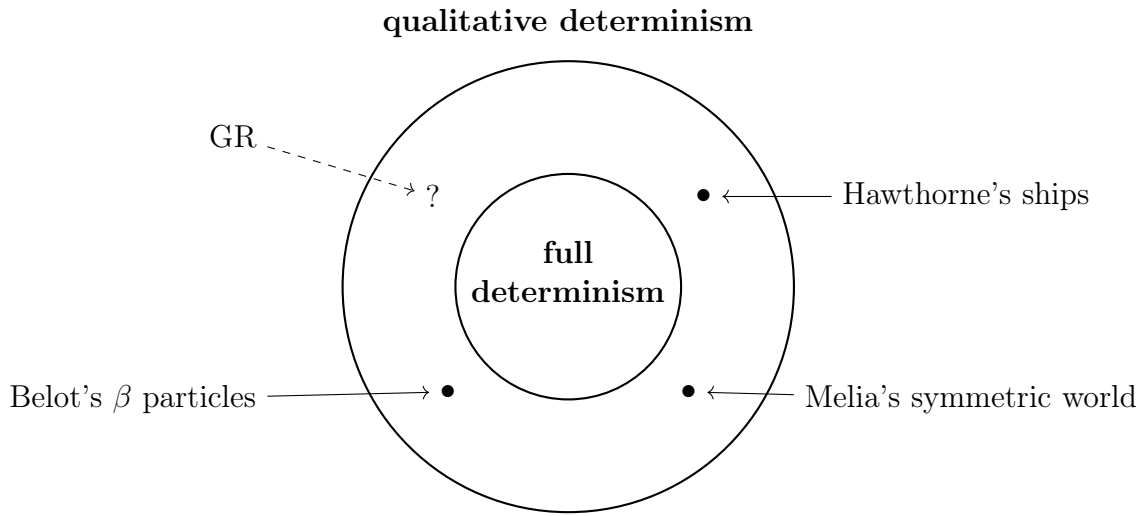

As a warmup, consider one of Melia's simpler (but more entertaining)
examples:
\begin{quote}
  We could imagine a collection of bald philosophers, sitting in a
  circle. It is a law that one of them will grow a single hair. But,
  by the symmetry of the situation, \emph{any} of the philosophers
  could be the lucky one. Again, our intuition is that there are many
  qualitatively isomorphic but distinct possibilities, each
  representing a different way in which the situation could
  evolve. \citep[p 650]{melia} \end{quote} Intuitively this example is
qualitatively deterministic, since any two possible final conditions
are qualitatively identical. And yet, this example clearly isn't fully
deterministic, since the law does not stipulate \emph{which}
philosopher will grow a hair. We agree that this example is
deterministic in one sense, but not another. Only we think that the
correct analysis is that this example is D1 but not-D3.

To see this, suppose that there are initially $n>1$ philosophers, and
that $p_t$ represents the property of being bald at time $t$. Let $T$
be the theory with axioms $\forall x p_0(x)$ and
$\exists !x\neg p_1(x)$. Since $T$ entails that $p_0$ holds of all
things, the predicate symbol $p_0$ plays no role in the analysis, and
we may drop it. If we set $p=p_1$ for notational simplicity, then a
model $M$ of $T$ is determined by a set $S$ and a singleton subset
$p^M\subseteq S$. If $M,M'$ are models of $T$ with the same initial
conditions (i.e.\ the same domain $S$), then there is at least one
bijection $g:S\to S'$ such that $g(p^{M})=p^{M'}$. Thus, $g:M\to M'$
is an isomorphism, and D1 is automatically satisfied.

We now show that D3 fails. Let $M$ be a model of $T$ whose domain $S$
has two elements; and let $M'$ be the model that has the same domain
as $M$, but where the extension $p^M$ has been switched to the other
element of the domain, i.e.\ $p^M\neq p^{M'}$. (Note that $M$ and $M'$
are isomorphic models.)  Then the identity $1_S$ is an isomorphism
between the initial conditions of $M$ and $M'$. However, if $1_S$ were
an isomorphism of $M$ to $M'$, then it would follow that $p^M=p^{M'}$,
contradicting the definition of $M'$. Therefore, $T$ is
D3-indeterministic.

Melia's bald philosophers example is supposed to be a paradigm case
where qualitative, but not full, determinism holds. But our split
intuitions about this example can be explained by a more clear
distinction, viz.\ that between D1 and D3. The bald philosophers
example does not provide any support for the legend that there is a
deeper, metaphysical sense of determinism that cannot be captured by a
formal definition.

But Melia has another trick up his sleeve: an example so clever in
conception that one feels sure that the quest for a formal definition
of determinism will have to be abandoned.
\begin{quote}
  Consider a world whose initial conditions consist of the following
  situation (see Figure \ref{joe}). The two white particles are
  duplicates of each other and the two black particles are duplicates
  of each other. The laws in this world dictate that, after a certain
  fixed period of time, each black particle will start moving at a
  fixed velocity in a straight line towards a white particle, and that
  the two black particles will move towards \emph{different} white
  particles. Using names for the objects found in the situation above,
  after a fixed amount of time either \textbf{c} will head towards
  \textbf{b} and \textbf{d} will head towards \textbf{a}, or
  \textbf{c} will head towards \textbf{a} and \textbf{d} will head
  towards \textbf{b}. \citep[p 661]{melia}
\end{quote}
Once again, it is clear that the two possible final conditions are
qualitatively identical, and hence that this example is qualitatively
deterministic. But surely, one thinks, there is a \emph{haecceitistic}
difference between the two final conditions, and so the example is not
fully deterministic.

\begin{figure}[h]
  \centering
\begin{tikzpicture}

\node[draw, circle, minimum size=1cm] (a) at (-2.5, 0) {};
\node[draw, circle, minimum size=1cm] (b) at (2.5, 0) {};

\node[draw, circle, fill=black] (c) at (0, 1) {};
\node[draw, circle, fill=black] (d) at (0, -1) {};

\node[above, yshift=0.2cm] at (a.north) {\textbf{a}};
\node[above, yshift=0.2cm] at (b.north) {\textbf{b}};
\node[above, yshift=0.3cm] at (c.north) {\textbf{c}};
\node[below, yshift=-0.3cm] at (d.south) {\textbf{d}};

\end{tikzpicture} \caption{Melia's symmetric world} \label{joe}
\end{figure}
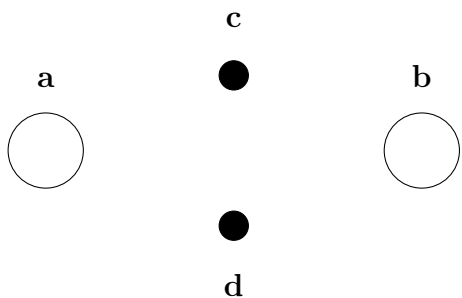

We will show, once again, that what this example illustrates is the
distinction between the definitions D1 and D3. Here we just have to be
a bit more careful with how we explicate the details of the example.

The most straightforward regimentation of Melia's example has two
domains $W$ and $B$ for the particles, and for each $t=0,1$, a
relation $\alpha _t (x,y)$ to indicate that at time $t$, $x$ is
adjacent to $y$. We add the axioms that at $t=0$, no two things are
adjacent, and that at $t=1$, the adjacency relation induces a
bijection between $W$ and $B$.\footnote{We can imagine the white and
  black particles as sitting on the vertices of a kite, and we can
  think of $\alpha$ as a metric, where at $t=0$, each white particle
  is distance $\sqrt{5}$ from each black particles; and at $t=1$, each
  white particle is distance $0$ from one black particle, and distance
  $4$ from the other.}  Since $T$ dictates that $\alpha _0$ is empty,
its presence is structurally irrelevant. Thus we drop $\alpha _0$ and
set $\alpha =\alpha _1$. Let's call the resulting theory $T$.

A model $M$ of $T$ consists of two sets, each with two elements, and a
bijection $\alpha ^M$ between them. Here the initial conditions are
just the two sets, whereas the final conditions include the bijection
$\alpha ^M$. It is clear that any two models $M,M'$ of $T$ are
isomorphic, and so $T$ automatically satisfies D1.

It will now be easy to see that $T$ does \emph{not} satisfy D3, and
the reason is that the final conditions have less symmetry than the
initial conditions. More rigorously, in a model $M$ of $T$, the
initial conditions (i.e.\ the two sets $W$ and $B$) are invariant
under any symmetry of the form $\langle f_0,f_1\rangle$, where
$f_0:W\to W$ and $f_1:B\to B$ are bijections. In contrast, the final
conditions include a bijection $\alpha ^M:W\to B$, and this bijection
is not invariant under symmetries of the form $\langle f_0,f_1\rangle$
where $f_0$ and $f_1$ do not have the same polarity (i.e.\ where $f_0$
is the identity and $f_1$ flips elements or vice versa).  Therefore,
there is a symmetry of initial conditions that does not extend to a
symmetry of models, and $T$ is D3-indeterministic.

The upshot: to understand the sense in which Melia's doubly-symmetric
world is indeterministic, we do not need to know anything about
haecceitistic differences. It is enough to see that there is a
duplication of initial conditions that does not extend to a duplication
of worlds. So this example, and others like it, only emphasizes the
virtues of ``purely formal'' definitions of determinism.

Similar remarks can be made about Hawthorne's ships. As with Belot's
and Melia's examples, his example is D1 but not D3 deterministic ---
reinforcing our claim that this distinction suffices to capture our
intuitions.

To be precise, let $T$ be the theory with a single unary predicate $p$
that says: there are exactly two things, and one of them is $p$. Here
we take $p(x)$ to mean that $x$ sinks at $t=1$. This theory is as
simple as can be imagined. A model $M$ of $T$ consists of a set with
two elements and a singleton extension for $p$. For any two models $M$
and $M'$ of $T$, there is a unique isomorphism $f:M\to M'$.  This
shows that $T$ is D1-deterministic.

But $T$ is not $D3$ deterministic. To see this, let $S=S'=\{ a,b\}$,
let $i:S\to M$ be the embedding into a model $M$ such that
$M\vDash p(a)$, i.e., $a$ sinks, and let $i':S'\to M'$ be the
embedding into a model $M'$ such that $M'\vDash p(b)$, i.e., $b$
sinks. Then $1_S:S\to S'$ is an isomorphism of initial conditions, but
there is no $g:M\to M'$ that completes the following diagram:
\[ \begin{tikzcd}
    M\arrow[r, dashed, "g"]  & M'  \\
    S \arrow[u, "i"] \arrow[r, "1_S"'] & S'\arrow[u, "i'"']
  \end{tikzcd} \] Therefore, $T$ is D3-indeterministic.\footnote{For what it is worth, Hawthorne's ships example is also indeterministic on D2, for the same reason.  Thanks to an anonymous referee for suggesting we make this explicit.}

In summary, the examples by Belot, Melia, and Hawthorne have been
thought to provide evidence for the existence of a gap between
qualitative and full determinism. What these examples actually
illustrate is the distinction between D1 and D3.

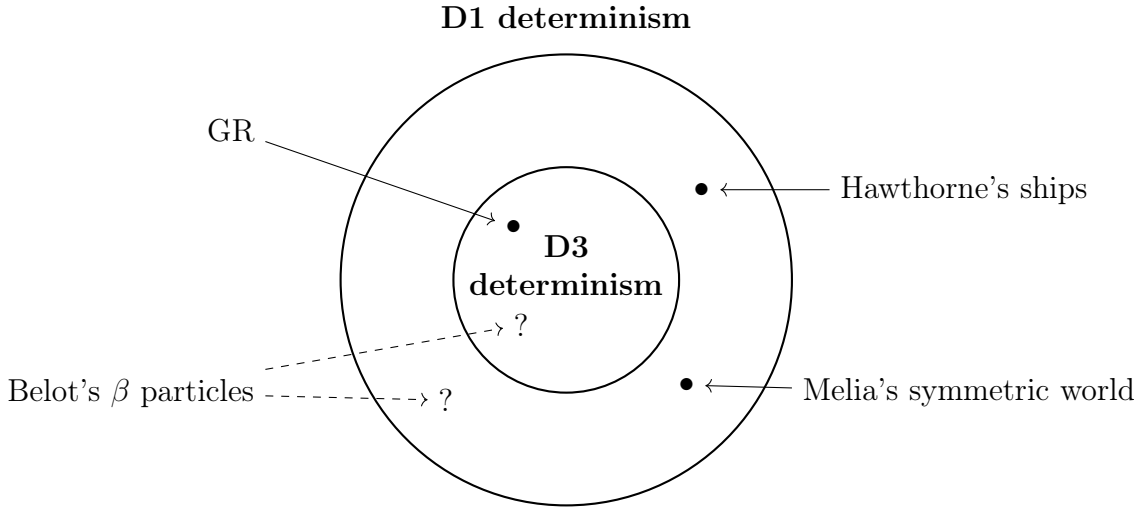
\begin{figure}[h]
  \begin{tikzpicture}
    \draw[thick] (0,0) circle (3cm);
    \node at (0,3.5) {\textbf{D1 determinism}};

    \draw[thick] (0,0) circle (1.5cm);
    \node[align=center] at (0,0.2) {\textbf{D3} \\ \textbf{determinism}};

    \node (GR) at (-0.7,0.7) {\textbullet};
    \node (Ships) at (1.8,1.2) {\textbullet};
    \node (Belot) at (-1.6,-1.6) {?};
    \node (Alt) at (-0.6,-0.6) {?};
    \node (Melia) at (1.6,-1.4) {\textbullet};

    \node[left] (GR_label) at (-4,2) {GR};
    \draw[->] (GR_label) -- (GR);

    \node[right] (Ships_label) at (3.5,1.2) {Hawthorne's ships};
    \draw[->] (Ships_label) -- (Ships);

    \node[left] (Belot_label) at (-4,-1.5) {Belot's $\beta$ particles};
    \draw[->, dashed] (Belot_label) -- (Belot);
    \draw[->, dashed] (Belot_label) -- (Alt);

    \node[right] (Melia_label) at (3,-1.5) {Melia's symmetric world};
    \draw[->] (Melia_label) -- (Melia);

  \end{tikzpicture}
  \caption{The toy examples in the literature are D3-indeterministic,
    while GR is D3-deterministic. Belot's example can be interpreted
    in two ways, one deterministic and one indeterministic.}
\end{figure}

\section{D2, D3 and Full Determinism} \label{proofs}

We have suggested that D1 captures the concept of qualitative
determinism, and the literature agrees with us on this. Where
intuitions might still clash is whether D2 or D3 captures the
strongest sense of determinism that it makes sense to ask about. We
claim that D2 is not yet full determinism, but D3 is.

A paradigm example that satisfies D2 but not D3 is Belot's: the
growing sphere of $\beta$ particles in Newtonian spacetime, at least
when formalized as consisting of two kinds of things, $\beta$
particles and spacetime points, that are \emph{not} connected by a
lawlike relation. This example satisfies D2, since any symmetry of
empty space trivially extends to a symmetry of space and its material
contents. This example does \emph{not} satisfy D3, since symmetries of
empty space do not determine symmetries of material contents. But what
does the non-uniqueness of extensions of isomorphisms have to do with
determinism? Lewis' account, read naively, says that a failure of
determinism entails the existence of two distinct worlds. But here it
is hard to see what entitles us to say that there are two distinct
worlds. After all, any two models of this toy theory are
isomorphic. The problem is not a failure of isomorphism, it is that
there are multiple isomorphisms.

We should not allow ourselves to become confused by trying to count
the number of possible worlds, or by trying, in general, to analyze
determinism in terms of static concepts. Determinism is a claim about
how things at one time are related to things at another time. As
Carnap et al.\ realized, it is more clear to describe determinism as a
property of a scientific theory: that theory posits a fixed relation
between past states and future states. If we set aside the murky talk
about identity of possible worlds, it should be clear from common
sense that Belot's example can be described in two ways: either the
relationship of the $\alpha$ particle to spatial points does determine
the relationship of the $\beta$ particles to spatial points, or it
doesn't. In the latter case --- made explicit by our toy theory with
two sorts $S$ and $B$ --- there is only one model (up to isomorphism),
but in that model, a symmetry of an initial time slice does not
determine a symmetry of a latter time slice. That's a clear sign that
that facts about the latter time are not entailed by the laws and
facts about the earlier time.\footnote{We have tacitly moved here from
  a claim about models to a claim about entailments. We suspect that
  this move can be formalized as a general version of Beth's theorem,
  showing the implicit definability entails explicit definability. In
  this case, the semantic formulation of determinism should entail a
  corresponding syntactic formulation.}

The upshot of these considerations is that neither D1 nor D2 captures
the strongest sense in which the past can determine the future. We
argue, in contrast, that D3 does.\footnote{In fact, at least for spacetime theories, more can be said about the difference between D2 and D3.  \citet{dialogue} show that
  D3 holds if and only if: D2 holds along with any one of three
  equivalent rigidity statements which are articulated in: \citet{geroch}, \citet{regarding}, and \citet{close}.}

It is not so easy to compare D3 to Full Determinism, since the latter
hasn't been given a precise, formal definition. When we tried to make
Full Determinism precise, the only way we could do so was by positing
a distinction between a de re language $\ds$ and a qualitative
sub-language $\qs$. So let's first look at the special case of
theories that come equipped with a de re language, i.e.\ names for all
objects. In that case, we can show that FD, D1, D2, and D3 are all
equivalent. But this result also shows why it is important to look at
theories \emph{without} names, where FD does not apply, because only
those theories illustrate the distinctions between D1, D2, and D3.

To make things more clear, we remind the reader of a simple point: a
$\ds$-theory $T$ is fully deterministic iff $T$ satisfies D1.

\begin{prop} For theories with sufficiently many names, FD, D1, D2,
  and D3 are equivalent. \label{collapse} \end{prop}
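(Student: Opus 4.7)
The plan is to close a chain of implications: the definitions give D3 $\Rightarrow$ D2 $\Rightarrow$ D1 without any hypothesis, and the paragraph preceding the proposition records that D1 and FD coincide for the $\ds$-theories in question. So it suffices to establish D1 $\Rightarrow$ D3 under the hypothesis that the language of $T$ contains a constant symbol naming every element of every model of $T$ (that is, what ``sufficiently many names'' should mean).

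The engine of the argument is a rigidity lemma: under this hypothesis, for any two models $M, M'$ of $T$ there is at most one isomorphism $g:M\to M'$. The reason is that any such $g$ must satisfy $g(c^M)=c^{M'}$ for every constant $c$, and the interpretations of these constants exhaust the domain of $M$, so $g$ is determined pointwise. The same remark applies to initial segments, which inherit the constants of $T$: any duplication $f:U\to U'$ of initial segments is the unique such map.

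Granted the rigidity lemma, D1 $\Rightarrow$ D3 reduces to a short bookkeeping argument. Suppose $f:W_t\to W'_{t'}$ is a duplication of initial segments. By D1 there exists some duplication $g:W\to W'$. Since the time-slice structure is definable (and in particular preserved by isomorphisms, once names pin down which elements lie at which time), $g$ restricts to a duplication $g\!\upharpoonright\! W_t : W_t\to W'_{t'}$. By rigidity applied to the initial segments, $g\!\upharpoonright\! W_t = f$; by rigidity applied to full models, $g$ is the unique duplication $W\to W'$, and a fortiori the unique one extending $f$. This gives D3.

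The step I expect to require the most care is the claim that the restriction of $g$ to $W_t$ lands precisely on $W'_{t'}$, so that rigidity between these specific initial segments can be invoked. This is where the ``sufficiently many names'' hypothesis really earns its keep: because every domain element is named, isomorphisms must respect the time labels and therefore carry $W_t$ onto $W'_{t'}$. Once that point is secured, the remainder of the argument is simply a pair of appeals to the rigidity lemma.
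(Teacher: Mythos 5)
Your proposal is correct and follows essentially the same route as the paper's own proof: reduce everything to showing D1 $\Rightarrow$ D3, and derive both the extension property ($g|_U = f$) and the uniqueness of $g$ from the fact that naming every element forces isomorphisms of initial segments and of full models to be unique. Your explicit ``rigidity lemma'' and your care about the restriction of $g$ landing on the correct initial segment are just slightly more pedantic versions of the steps the paper states in passing.
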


\begin{proof} As noted above, the definition of FD is formulated
  against a background assumption that there are two signatures
  $\qs\subset\ds$. However, FD is just D1 for a $\ds$-theory, and we
  are assuming that $T$ is such a theory. It will suffice then to show
  that D1 implies D3. Suppose then that $T$ satisfies D1, and let
  $f:U\to U'$ be an isomorphism. By D1, there is an isomorphism
  $g:M\to M'$. Since all elements of $U$ and $U'$ are named, the
  isomorphism $f:U\to U'$ is unique, hence $g|_U=f$. That is, $g$
  extends $f$. Similarly, since all elements of $M$ and $M'$ are
  named, the isomorphism $g:M\to M'$ is unique. Therefore $g$ is the
  unique extension of $f$, and $T$ satisfies D3.
\end{proof}

But what about the case where we do not have names? One natural
suggestion for extending FD to such theories is simply to add names
when necessary. Given a $\Sigma$-theory $T$, let $\Sigma ^+$ be the
expansion of $\Sigma$ to include ``sufficiently many'' names, and let
$T^+$ be the extension of $T$ to $\Sigma ^+$. We can then ask about
the relation between $T$ having property D2 or D3 and $T^+$ having
property FD (i.e., D1).  In some cases, D2 (and D3) for $T$ do imply
FD for the enriched theory. In particular, this happens for theories
of enduring objects -- that is, theories whose models' domains
coincide with the domains of their initial segments. D2 for a theory
$T$ implies that the name-enriched theory $T^+$ satisfies FD.

\begin{prop} Suppose that $T$ is a theory of enduring objects. If $T$
  satisfies D2, then $T^+$ satisfies FD. \end{prop}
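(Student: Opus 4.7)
The plan is to take an arbitrary $\Sigma^+$-isomorphism $f^+:U^+\to U'^+$ between initial segments of two $T^+$-models $M^+$ and $M'^+$, and produce a $\Sigma^+$-isomorphism $g^+:M^+\to M'^+$. This suffices, because $T^+$ has a name for every element, so FD for $T^+$ is just D1 for $T^+$, which is what we will have verified.

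First, I reduce to the base signature. Writing $M,M',U,U'$ for the $\Sigma$-reducts of $M^+,M'^+,U^+,U'^+$, the map $f^+$ is automatically a $\Sigma$-isomorphism $f:U\to U'$ of initial segments of $T$-models, since $\Sigma\subseteq\Sigma^+$. Applying D2 for $T$ to $f$ yields a $\Sigma$-isomorphism $g:M\to M'$ satisfying $g\circ i=i'\circ f$, where $i:U\to M$ and $i':U'\to M'$ are the initial-segment embeddings.

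The key step is to promote $g$ to a $\Sigma^+$-isomorphism. Here the enduring-objects hypothesis does the essential work: because the domain of $M$ equals the domain of $U$ (and similarly for $M'$), the embeddings $i$ and $i'$ are identity maps on underlying sets, so as a function of points $g$ coincides with $f$, hence with $f^+$. Now for each constant symbol $c\in\Sigma^+\setminus\Sigma$, the element $c^{M^+}$ lies in $\mathrm{dom}(M)=\mathrm{dom}(U)$, so $c^{M^+}=c^{U^+}$, and likewise $c^{M'^+}=c^{U'^+}$. Because $f^+$ preserves the interpretation of $c$ and $g$ agrees with $f^+$ as a set function, $g(c^{M^+})=c^{M'^+}$. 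Augmenting $g$ with these name interpretations therefore yields a $\Sigma^+$-isomorphism $g^+:M^+\to M'^+$ (which in fact extends $f^+$), establishing D1, and hence FD, for $T^+$.

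The main place to be careful, and the reason the enduring-objects hypothesis is essential, is the step in which one shows $g$ automatically respects names. Without that hypothesis the domain of $M$ could properly contain that of $U$; then D2 would supply a $g$ acting on elements that $f^+$ never constrained, and those elements could bear $\Sigma^+$-names whose interpretations $g$ has no reason to preserve. The enduring-objects clause rules this out by ensuring every name-bearer at any later time is already present at the initial time, where $f^+$ has already pinned down how names must transform.
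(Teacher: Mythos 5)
Your proof is correct and follows essentially the same route as the paper's: reduce the $\Sigma^+$-isomorphism of initial segments to a $\Sigma$-isomorphism, apply D2 to obtain $g$ extending $f$, and use the enduring-objects hypothesis ($|M|=|U|$) to conclude that $g$ automatically sends $c^{M^+}$ to $c^{M'^+}$ for every name $c$, hence is a $\Sigma^+$-isomorphism. Your closing remark about why the hypothesis is essential is a nice addition but does not change the argument.
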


\begin{proof} Suppose that $T$ satisfies D2. Let $M,M'$ be models of
  $T^+$ with initial segments $U,U'$ such that $U\de U'$. That is,
  there is a $\Sigma ^+$-isomorphism $f:U\to U'$. Since $T$ and $T^+$
  are theories of enduring objects, $|M|=|U|$ and $|M'|=|U'|$. It
  follows that $f(c^M)=c^{M'}$, for all names $c$. Since $f$ is a
  $\Sigma$-isomorphism, D2 entails that there is a
  $\Sigma$-isomorphism $g:M|_{\Sigma}\to M'|_{\Sigma}$ that extends
  $f$. Since $g$ extends $f$, $g(c^M)=f(c^M)=c^{M'}$, for all names
  $c$. Therefore, $g$ is a $\Sigma ^+$-isomorphism, and $M\de
  M'$. Therefore, $T^+$ satisfies FD.
\end{proof}

What drives this result is that if objects persists over time, then
fixing the referents of constant symbols in initial segments fixes
those referents in the entire model.  But as we have seen, not all
theories have enduring objects.  In general, we find that even D3 for
a theory $T$ comes apart from FD for an enriched theory $T^+$.  First
we now show that $T^+$ satisfying $FD$ does not imply that $T$
satisfies D3.

\begin{example}[toy Leibnizian spacetime] Let $T$ be a theory with two
  sort symbols $S_0,S_1$, and axioms that say that there are exactly
  two objects of each sort. The theory $T$ satisfies D2, but not D3,
  since an automorphism of the first sort $S_0$ extends in more than
  one way to an automorphism of the second sort $S_1$. The theory
  $T^+$ adds two constant symbols of each sort, say $a_0,b_0$ and
  $a_1,b_1$. But then for any models $M,M'$ of $T^+$, there is an
  isomorphism $f:M\to M'$, and so $M\de M'$. Therefore, $T^+$
  satisfies FD. \hfill $\Box$ \end{example}

The converse also fails: that is, a theory $T$ may satisfy D3, but the
enriched theory $T^+$ with sufficiently many names might not satisfy
FD.

\begin{example}[toy Newtonian spacetime] We now consider Newtonian
  spacetime as having a different domain $S_t$ of spatial points for
  each time, and isomorphisms $\delta _{t,t'}:S_t\to S_{t'}$ that pick
  out the preferred frame of reference. For our purposes, it suffices
  to consider a simple case with two sorts $S_0,S_1$, and a single
  function $\delta :S_0\to S_1$. Let $T$ be the theory that says there
  are exactly two elements of type $S_0$, and that $\delta$ is a
  bijection. If $M,M'$ are models of $T$, then an isomorphism
  $g:M\to M'$ consists of two bijections $g_0:S_0\to S_0'$ and
  $g_1:S_1\to S_1'$ that satisfy the compatibility condition
  $\delta ^{M'}\circ g_0=g_1\circ \delta ^M$. It follows that any
  bijection $g_0:S_0\to S_0'$ extends uniquely to a bijection
  $g:M\to M'$. Therefore, $T$ satisfies D3.

  For the name-enriched theory $T^+$, suppose that $a_0,b_0$ are
  constant symbols of sort $S_0$, and that $a_1,b_1$ are constant
  symbols of sort $S_1$. Then there is one model $M$ of $T^+$ such
  that $M\vDash \delta (a_0)=a_1$, and a non-isomorphic model $M'$ of
  $T^+$ such that $M'\vDash \delta (a_0)=b_1$. But the initial
  segments of $M$ and $M'$ are isomorphic. Therefore, $T^+$ does not
  satisfy FD. \hfill $\Box$ \end{example}

Note that in the toy Newtonian example, $T^+$ does not satisfy D3 (in
addition to not being fully deterministic). Therefore, adding names to
a D3 deterministic theory can result in a D3 indeterministic theory.
It might seem like a strike against D3 that it is not stable under the
addition of names to a theory. But that intuition is based on a false
assumption that the role of names in formal theories is the same as
the role of names in ordinary language. In a formal theory,
introducing a new name is tantamount to introducing a new property
$\varphi (x)\equiv (x=c)$. But we should expect that adding new
properties, without adding dynamical laws that govern the behavior of
those properties, could transform a deterministic theory into an
indeterministic theory.

Something similar happens with GR.  As usually formulated, GR
satisfies D3 \citep{close}.  But adding names to GR results in a
theory that does not satisfy D3, and thus is not fully deterministic
--- as shown by the hole argument.  Indeed, \citet{regarding} suggests
that one way of understanding manifold substantivalism, as described
by \citet{Earman+Norton}, is as a view on which there are additional
singular, haecceitistic facts about spacetime points that can only be
described by something like enriched --- or, in Pooley's terms,
substantivalist --- GR. The hole argument then shows this theory is
indeterministic, even on D1.  Of course, the important point is that
this enriched theory is not GR, the theory that we have good reasons
to believe at least approximately describes the structure of space and
time in our universe, but rather GR plus a great deal more structure.
Without that structure, GR is deterministic by D3, and FD does not
even apply.

\section{Interpretation revisited}

Some readers may find the analysis at the end of the previous section
unsatisfactory.  One might argue, for instance, that we have simply
misunderstood full determinism. What motivates full determinism is the
idea that there are individuals in the world, and haecceitistic facts about those individuals. A deterministic theory
ought to be able to assign them properties in an unambiguous
(deterministic) way. The names just give us a way of referring to
those individuals.  When we say that FD simply does not apply to GR, that
is not a problem for FD, it is a problem for GR!  We need enriched GR
to accurately assess whether the theory determines the properties of
individuals.  Without names, we are simply dwelling in the domain of
the qualitative.

We think this posture is wrongheaded.  But rather than argue against
it directly, we want to propose a diagnosis of where it originates,
drawing on the arguments from section \ref{sec:interpretation}.  As we
argued there, interpretation is itself formal. All anyone is doing
when they try to interpret theories is just layering models on models.
Crucially, for the present point: Tarskian semantics involves mapping
theories, with or without names, into set theory, typically understood
as a theory with names (or rather, as theory whose membership
properties allow us to uniquely individuate sets).

We suggest that the motivation for FD arises because when you
interpret a theory without names in set theory, without paying careful
attention to how the semantics works, it looks as if the ``real''
points, the ones the theory is referring to, have names.  This
apparently means we can ask about what ``determines'' what properties
those named things have.  But this instinct is a mistake.  It
illegitimately mixes two different things: the theory we are trying to
analyze, and the formal tools we use to analyze it.  Determinism for
theories is about whether initial segments of models determine the
entire model. The ``determination'' of which objects in a model carry
which properties (or names) is about an interpretation map, in the
Tarskian sense.  In other words, failures of FD are about \emph{us},
that is, about how we think about our formal semantics and how we
define interpretation maps, not about our theories or the world.

We suggest that something similar happens in many discussions of the
hole argument. Philosophers apparently mistake GR for enriched GR when
thinking about substantivalism.  Doing this is not only a mistake, it
quickly leads to incoherence. We can see this point most starkly by
considering a special sector of GR, consisting of models that satisfy
a strong asymmetry property called ``Heraclitus'' \citep{heraclitus,
  four-man}.  Spacetimes with that property are such that every point
is uniquely specified by its metrical properties (including
derivatives, i.e., curvature scalars).  Call the theory of Heraclitus
spacetimes HGR.  HGR has names, in the sense that one can uniquely
refer to points. This theory satisfies FD.  (Of course, it also
satisfies D3.)

Now consider what happens when we interpret HGR in set theory.  We
assign those named points to sets, which also have names.  But of
course, nothing in the theory can determine which (named) set we
assign to which named point. No theory can do that, because it happens
at the level of choosing an interpretation map!  What this means,
though, is that on the set theory side we now have \emph{too many
  names}.  That is, if we try to doubly-interpret this theory, as we
suggest the FD advocate would wish to, and run our analysis of
determinism on those doubly-interpreted structures, we will find that
they are not deterministic.  We claim this is a completely generic
situation that arises from layering interpretation on top of set
theory.

And it gets worse! Suppose we somehow solved this problem in the
doubly-interpreted theory, perhaps by adding laws that coordinate
between the two types of names, restoring FD.  What then?  Now we have
a new theory, with lots of redundant names, that has no expressive
resources beyond our original theory.  (Perhaps the theories are even
logically equivalent, depending on the details.)  But then we can
interpret that theory, using Tarskian semantics.  The (triply)
interpreted theory will now have three types of names --- the two
coordinated ones in our theory, plus the names of the sets on the
semantic side.  The problem will arise again.  And so on ad infinitum.
But once we see how this works, it is clearly a pseudo-problem, one
arising only because of a confusion about what is part of the theory
and what is not, what the theory \emph{should} determine and what is
merely structure added in interpretation.

\section{Conclusion}

Analytic philosophy was, in one sense, born from the idea that
philosophers should avail themselves of the tool-kit of
mathematics. We find it odd, then, that prominent analytic
philosophers have recently argued \emph{against} formal approaches,
saying things like, ``the purely formal approach is a nonstarter'',
or, ``determinism cannot be a formal property of theories.'' The
descriptive content of such claims is opaque (what is a non-formal
property?), but their illocutionary force is to recommend against
adopting the methods that distinguished analytic philosophy from the
more speculative, and less science-friendly, approaches of the
nineteenth century.  Surely this belongs among the ironies of
intellectual history.

We are also motivated by a practical concern about how to facilitate
fruitful dialogue between philosophy and the natural sciences. If
philosophers insist on making distinctions that cannot gain any
traction in scientific practice, then they will only reinforce
disciplinary boundaries that are harmful to both philosophy and the
sciences.

To be clear, we are \emph{not} arguing for a kind of science-deference
that says, ``if scientists don't regularly make some distinction, then
neither should philosophers''. For one, we recognize that scientists
might have practical reasons to blur over distinctions of genuine
metaphysical significance. One might have thought that the distinction
between qualitative and full determinism is of this sort; but our
investigation has shown that the relevant conceptual joint is not
here. The relevant conceptual joints are drawn with the help of
isomorphisms between models, and these joints clearly distinguish
which real-life scientific theories are deterministic. With this kind
of division of labor, philosophy and physics can work in tandem to
figure out whether the world is deterministic.

\printbibliography

\end{document}